\documentclass[10pt,aps,pra,amsfonts,amssymb,amsmath,twocolumn,
floatfix,nofootinbib,groupedaddress,superscriptaddress,citesort]{revtex4-1}

\usepackage{amsthm}
\usepackage{mathrsfs}
\usepackage{amsfonts}
\usepackage{amstext}
\usepackage{color,xcolor}
\usepackage{bm}
\usepackage{bbm}
\usepackage[hidelinks]{hyperref}
\usepackage{graphicx}
\usepackage{subcaption}
\usepackage{amsmath,amssymb}
\usepackage{lipsum} 
\usepackage{ragged2e}
\makeatletter
\long\def\@makecaption#1#2{\vskip 2pt\setbox\@tempboxa\hbox{{#1: #2}}\ifdim \wd\@tempboxa >\hsize\justifying{\small #1: #2}\par \else \hbox to\hsize{\hfil\box\@tempboxa\hfil}\fi\vskip 2pt}
\makeatother

\def\qed{\leavevmode\unskip\penalty9999 \hbox{}\nobreak\hfill
	\quad\hbox{\leavevmode  \hbox to.77778em{%
			\hfil\vrule   \vbox to.675em%
			{\hrule width.6em\vfil\hrule}\vrule\hfil}}
	\par\vskip3pt}
\newcommand{\Ei}{\operatorname{Ei}}
\newcommand{\Ci}{\operatorname{Ci}}
\newcommand{\Si}{\operatorname{Si}}
\newcommand{\Shi}{\operatorname{Shi}}

\newtheorem{theorem}{Theorem}

\newtheorem{example}[theorem]{Example}
\newtheorem{definition}[theorem]{Definition}

\begin{document}
	
	\preprint{APS/123-QED}
	
	\title{Imaginarity Resource Theory of Gaussian Quantum Channels}
	
	\author{Ting Zhang}
	\affiliation{School of Mathematical Science, Shanxi University,
		Taiyuan 030006, P. R. China}
	
		\author{Jinchuan Hou}
	\affiliation{College of Mathematics, Taiyuan University of
		Technology, Taiyuan 030024, P. R. China}
	
\author{Xiaofei Qi}
	\affiliation{School of Mathematical Science, Shanxi University,
		Taiyuan 030006, P. R. China}\affiliation{Key Laboratory of Complex Systems and Data Science of 
		Shanxi University, Taiyuan  030006,  Shanxi, China}


\begin{abstract}
	
Complex numbers play an indispensable role in quantum mechanics and quantum information, as validated by both theoretical analysis and experimental verification.  Since quantum information processing inherently relies on quantum channels, the resource theory for quantum channels is equally fundamental to that for quantum states.	In this paper, we propose two frameworks for quantifying the imaginarity of Gaussian channels. The first framework regards all real superchannels as free superchannels.  Within this setting, we introduce two concrete imaginarity measures for Gaussian channels:
 ${\mathcal I}_s^{GC}$ based on existing imaginarity measures of Gaussian states, and
${\mathcal I}_d^{GC}$  derived directly from the intrinsic parameters of Gaussian channels, which enjoys high computational simplicity.
The second framework adopts only a proper subset of real superchannels as free superchannels. Under this framework, we put forward another imaginarity measure  ${\mathcal I}_c^{GC}$,  which is fully determined by the inherent parameters of Gaussian channels and features continuity as well as tractable computation. As a practical application, we employ ${\mathcal I}_c^{GC}$ to investigate the dynamical behavior of Quantum Brownian Motion   Gaussian channels throughout the entire evolutionary process.
\end{abstract}
\maketitle

\section{Introduction}

Quantum resource theory (QRT), as a foundational framework for characterizing and manipulating the non-classical properties of quantum systems, plays a key role in quantum information science.
Among various quantum resources, such as  entanglement, coherence  and nonlocality, imaginarity  play a far more fundamental role in quantum physics as no formulation of quantum theory can avoid incorporating them \cite{HW,ABW,RTW,WJG}.
A general QRT of quantum states consists of three fundamental elements: free states, free operations and measures that are non-increasing under free operations. For the resource theory of imaginarity, the free states are real states and the free operations are real operations \cite{HO,CG}. Concretely, assume that $H$ is a complex finite-dimensional Hilbert space and ${\mathcal S}(H)$ is the set of all quantum states on $H$. For any fixed reference basis $\{|j\rangle\}_{j=1}^d$ of $H$, a quantum state $\rho\in{\mathcal S}(H)$ is called a real state if $\langle j|\rho|k\rangle\in\mathbb{R}$ holds for all $j,k\in\{1,2,\cdots, d\}$; a quantum operation $\Phi:{\mathcal S}(H)\rightarrow{\mathcal S}(H)$ with Kraus representation $\Phi(\cdot)=\sum_lK_l(\cdot)K_l^{\dag}$ is called real  if its all Kraus operators are real, that is, $\langle j|K_l|k\rangle\in\mathbb{R}$ for all $l,j,k$ \cite{HG}.  Hickey and Gour  systematically formulated a QRT framework for quantum state imaginarity \cite{HG}. This framework has stimulated a surge of investigations into the theory. Research indicated that imaginarity, as a pivotal resource, plays a crucial role in various fields, such as hiding and masking quantum information \cite{Zhu}, multiparameter metrology \cite{MM}, machine learning algorithms \cite{SSS}, Kirkwood-Dirac quasiprobability distributions \cite{Bu,BD,BAN,WSP}, weak-value theory \cite{WG} and the nonlocal advantages exhibited by quantum imaginarity \cite{WF}, as well as tasks like quantum state discrimination \cite{WKR1} and quantum channel identification \cite{WKS}.


Gaussian quantum channels, as a special class of completely positive and trace-preserving (CPTP) linear maps, play irreplaceable roles in quantum information processing, particularly in continuous-variable (CV) systems such as optical quantum communication networks \cite{WPG}, and they can be implemented by current experimental techniques such as beam splitters, phase shifters, and homodyne measurements \cite{Ser}.
These channels naturally describe the noisy dynamics of bosonic modes, which are central to practical quantum technologies. Now, Gaussian channels have been extensively studied in the context of quantum capacity, error correction, quantum entanglement and quantum steering \cite{HSH,DPG,NFC,H1,WYH,WGW,MSQ}.

Note that imaginarity QRT primarily focused on finite-dimensional quantum systems, where various imaginarity measures for quantum states have been proposed \cite{WKR2,HG,WKR1,XXZ,KDS,XGL,WKS,DB,CL1}. However, Gaussian states and Gaussian channels reside in  CV systems,  presenting unique challenges that limit the direct extension of finite-dimensional imaginarity frameworks to Gaussian settings.
While some studies discussed imaginarity RQTs for $n$-mode Gaussian states, defining real Gaussian states and corresponding imaginarity measures based on fidelity, Tsallis relative entropy and  the covariance matrix (CM) of Gaussian states \cite{Xu1,Xu2,ZHQ},
the extension of these concepts to Gaussian quantum channels has been notably absent.  As Gaussian channels govern the evolution of quantum states in  CV systems,  understanding their role as imaginarity resources is essential for unlocking new quantum advantages in communication and computation.


In addition, quantum information processing inevitably involves noise, which can be modeled as quantum channels, and Gaussian noise is a ubiquitous feature of real-world optical and microwave systems.  So the ability to characterize how Gaussian channels manipulate imaginarity has direct implications for quantum error correction, secure communication, and multiparameter metrology, all of which rely on the non-classical properties of quantum states mediated by channel dynamics. Moreover, existing imaginarity measures for Gaussian systems suffer from prohibitive computational complexity in multi-mode scenarios, further hindering the analysis of Gaussian channels in imaginarity resource theory.



In this paper, we  develop   a comprehensive imaginarity resource theory of Gaussian quantum channels.
The article is organized as follows. In Section II, we present preliminaries about Gaussian states, Gaussian quantum channels and Gaussian superchannels. In Section III, we first provide  structural characterizations of real Gaussian superchannels and imaginarity breaking Gaussian superchannels, respectively.  Then  we put forward two frameworks for quantifying the imaginarity of Gaussian channels that satisfies the faithfulness and monotonicity with respect to real Gaussian superchannels. Based on these, we propose tractable imaginarity measures for Gaussian channels:   $\mathcal{I}_s^{GC}$ derived from the imaginarity measure of Gaussian states,  and readily computable $\mathcal{I}_d^{GC}$ and $\mathcal{I}_c^{GC}$, arising from the intrinsic characteristics of Gaussian channels; and analyze their monotonicity under free operations (a key requirement for valid resource measures in QRTs).  Furthermore, we provide examples to illustrate the respective advantages of $\mathcal{I}_s^{GC}$ , $\mathcal{I}_d^{GC}$ and $\mathcal{I}_c^{GC}$. Section IV is devoted to  studying the dynamic behaviour of  Gaussian imaginarity for Quantum Brownian Motion Gaussian channels. Section V gives a brief conclusion and discussion.

\section{Preliminaries}

In this section, we recall some notions and notations of Gaussian states, Gaussian quantum channels and Gaussian superchannels.

\subsection{Gaussian states}

Consider an $n$-mode  CV  system with state space $H=H_1\otimes H_2\otimes \cdots\otimes H_n$, where each $H_k \ (1\le k\le n)$ is an infinite-dimensional separable complex Hilbert space.
Denote   by $\{|j_1\rangle\otimes |j_2\rangle \otimes \cdots \otimes |j_n\rangle\}_{j_1,\cdots,j_n=0}^\infty$ the $n$-mode Fock basis of $H$, where  $\{|j_m\rangle\}_{j_m=0}^\infty$ is the single-mode Fock basis of $ H_m$,  $m=1,2,\cdots,n$.
The position and momentum operators are given by $$\hat{Q}_k=\hat{a}_k+\hat{a}^\dagger_k,\ \ \hat{P}_k=-i(\hat{a}_k-\hat a^\dag_k), \ \ k=1,2,\ldots,n, $$
where
$\hat{a}^\dagger_k$ and $\hat{a}_k$ are the creation and annihilation operators on the $k$-th mode $H_k$, satisfying the Canonical Commutation Relation (CCR)
$$[\hat{a}_k,\hat{a}_l^\dag]=\delta_{kl}I\ {\rm and}
\ [\hat{a}_k^\dag,\hat{a}_l^\dag]=[\hat{a}_k,\hat{a}_l]=0,\ \
k,l=1,\ldots,n.$$
Let $\mathcal{S}(H)$ be the set of all quantum states (i.e., positive bounded linear operators of unit trace) on $H$. For any state $\rho\in\mathcal{S}(H)$, its characteristic function $\chi_{\rho}$ is defined as
$$\chi_\rho(z)={\rm Tr}(\rho W(z)),$$
where $z=(x_{1}, y_{1}, \cdots, x_{n}, y_{n})^{\rm T}\in{\mathbb R}^{2n}$
and $W(z)=\exp(i{R^{\rm T}}z)$ is the Weyl displacement operator with
$R=(\hat{R}_1,\hat{R}_2,\ldots,\hat{R}_{2n})=(\hat{Q}_1,\hat{P}_1,\ldots, \hat{Q}_n,\hat{P}_n)$.
Assume that the state $\rho$ has finite second moments.
The displacement vector $\bar{\mathbf{d}} _0=\bar{\mathbf{d}}_\rho$ of $\rho$ takes the form
$$\begin{array}{rl}\bar{\mathbf{d}}_0=&(\langle\hat R_1 \rangle, \langle\hat R_2\rangle, \ldots ,\langle\hat R_{2n} \rangle)^{\rm T}\\
=&({\rm Tr}(\rho\hat R_1), {\rm Tr}(\rho \hat R_2), \ldots, {\rm Tr}(\rho \hat
R_{2n}))^{\rm T}\in{\mathbb R}^{2n},
\end{array}$$
and the covariance matrix (CM) $\nu=\nu_\rho=(\nu_{kl})\in {\mathcal M}_{2n}(\mathbb R)$ of $\rho$ is defined as
$$\nu_{kl}=\frac{1}{2}{\rm Tr}[\rho(\Delta\hat{R}_k\Delta\hat{R}_l+\Delta\hat{R}_l\Delta\hat{R}_k)]$$ with $\Delta\hat{R}_k=\hat{R}_k-\langle\hat{R}_k\rangle$ \cite{BV}.
It is known that every CM $\nu$ is real symmetric and must fulfill the uncertainty principle \cite{SMB}
$$\nu+i\Delta_n\ge0,$$ where $\Delta_n=\underbrace{\Delta\oplus\cdots\oplus\Delta}_{n \  {\rm times}}$ with  $\Delta=\left(
\begin{array}{cc}
0 & 1 \\
-1 & 0\\
\end{array}\right)$. In addition, the inequality $\nu+i\Delta_n\ge0$ implies $\nu> 0$.
 $\rho$ is called a Gaussian state \cite{Ser} if $\chi_\rho$ takes the form
$$\chi_\rho(z)={\rm Tr}(\rho W(z))=\exp(-\frac{1}{2}z^{\rm T}\nu z+i\bar{\mathbf{d}}_0^{\rm T}z).$$
Obviously, a Gaussian state $\rho$ is uniquely determined by   $\bar{\mathbf d}_0$ and  $\nu$. So, we  sometimes write $\rho=\rho(\bar{\mathbf{d}}_0,\nu)$.

 $\rho$ is called a real Gaussian state if $\rho$ is Gaussian and satisfies
$$\langle k_1|\langle k_2|\cdots\langle k_n|\rho|l_1\rangle|l_2\rangle\cdots|l_n\rangle\in\mathbb{R}$$
 for all Fock basis vectors $\{|k_1\rangle, \ldots,|k_n\rangle;|l_1\rangle, \cdots,|l_n\rangle\}$ \cite{Xu1}.
Denote by
$\mathcal{RGS}_n$ the set of all $n$-mode real Gaussian states.

\subsection{Gaussian quantum channels}

An $n$-mode quantum channel $\phi$ on $\mathcal S(H)$ is called Gaussian if $\phi$ sends any $n$-mode Gaussian states into $n$-mode Gaussian states; and is called real Gaussian if $\phi$ is Gaussian
and transforms any real Gaussian states into real Gaussian states \cite{Xu1}.
Specifically, any $n$-mode Gaussian channel $\phi$ can be represented as $\phi(T,N,\mathbf{d})$ in the following form \cite{CEG}: for any $n$-mode Gaussian state $\rho=\rho(\bar{\mathbf{d}}_0,\nu)\in{\mathcal S}(H)$, we have $\phi(\rho)=\rho'(\bar{\mathbf{d}}_0',\nu')$, where
$$\bar{\mathbf{d}}_0'=T\bar{\mathbf{d}}_0+\mathbf{d},\ \ \nu'=T\nu T^{\rm T}+N,$$
 $\mathbf{d}=(d_1,d_2,\ldots,d_{2n})^{\rm T}\in{\mathbb R}^{2n}$, $T=(t_{kl})$ and $N=N^{\rm T}=(n_{kl})\ge0$ are $2n\times 2n$ real matrices satisfying
$$N+i\Delta_n-iT\Delta_n T^{\rm T}\ge 0.$$
In addition, the Gaussian channel $\phi(T,N,\mathbf{d})$ is real if and only if
\begin{equation}\label{eq2}
\begin{cases}d_{2k}=0\ \ {\rm for}\ \ k\in\{1,2,\ldots,n\},\\
n_{2k-1,2l}=0\ \ {\rm for }\ \ k,l\in\{1,2,\ldots,n\},\end{cases}\end{equation}
and either
\begin{equation}\label{eq3}
t_{2k,2l-1}=t_{2k,2l}=0 \ \ {\rm for} \ \ k,l\in\{1,2,\ldots,n\}
\end{equation}
or
\begin{equation}\label{eq4}
t_{2k-1,2l}=t_{2k,2l-1}=0 \ \ {\rm for} \ \ k,l\in\{1,2,\ldots,n\}.
\end{equation}
In particular, $\phi$ is said to be completely real if it satisfies Eqs.\eqref{eq2}-\eqref{eq3},
and is said to be covariant real if it fulfills Eqs.\eqref{eq2} and \eqref{eq4}. It is proved that if $\phi$ is a completely real Gaussian channel, then $\phi(\rho)$ is real for all Gaussian states $\rho$, that is, $\phi$ is imaginarity breaking \cite{Xu1}.

Denote by
$\mathcal {GC}_n$ the set of all $n$-mode Gaussian channels
and
$\mathcal {RGC}_n$ the set of all $n$-mode real Gaussian channels.

\subsection{Gaussian superchannels}

Recall that an $n$-mode Gaussian  superchannel is a completely positive linear map  transforming   $n$-mode Gaussian channels into $n$-mode Gaussian channels \cite{GG,CDP}.
Denote by
$\mathcal{GSC}_n$ the set of all $n$-mode Gaussian superchannels.
By \cite{Xuu}, any $n$-mode Gaussian superchannel $\Phi$ can be represented by $\Phi(A,O,Y,\bar{\mathbf{d}})$ in the following way: for arbitrary Gaussian channel $\phi=\phi(T,N,\mathbf{d})\in\mathcal{GC}_{n}$, we have $\Phi(\phi)=\phi'(T',N',\mathbf{d}')$ with
\begin{equation}\label{eq5}
\mathbf{d}'=A\mathbf{d}+\bar{\mathbf{d}},\ \ \ T'=AT\Sigma_n O^{\rm T} \Sigma_n,\ \ \ N'=ANA^{\rm T}+Y,
\end{equation}
where $\Sigma_n=\underbrace{\Sigma\oplus\cdots\oplus\Sigma}_{n\ {\rm times}}$ with $\Sigma=\left(
\begin{array}{cc}
1 & 0 \\
0 & -1\\
\end{array}\right)$, $A=(a_{kl}), O=(o_{kl}), Y=(y_{kl})$ are $2n \times 2n$ real matrices with $Y = Y^{\rm T}$,  $OO^{\rm T} = I_{2n}$, the $2n \times 2n$ identity,  and $\bar{\mathbf{d}}=(\bar{d}_1,\bar{d}_2,\ldots,\bar{d}_{2n})^{\rm T} \in \mathbb{R}^{2n}$, satisfying
\begin{equation}\label{eq6}
Y + i\Delta_n - iA\Delta_n A^{\rm T} \geq 0
\end{equation}
and $$i\Delta_n - iO\Delta_n O^{\rm T} \geq 0.$$
It is also shown \cite{Xuu} that
any Gaussian superchannel $\Phi(A, O, Y, \bar{\mathbf{d}})$ can be represented as
$$\Phi(\phi) = \phi_2 \circ \phi \circ \phi_1$$
for all $\phi\in\mathcal{GC}_n$, where $\phi_1(T_1, N_1, \mathbf{d}_1), \phi_2(T_2, N_2, \mathbf{d}_2) \in \mathcal{GC}_n$ are fixed Gaussian channels given by
$$T_1=\Sigma_n O^{\rm T}\Sigma_n, \ N_1 = 0, \ \mathbf{d}_1 = 0;$$
$$T_2 = A, \ N_2 = Y, \ \mathbf{d}_2 = \bar{\mathbf{d}}.$$

Before concluding this section, we fix some notations.
Let $P_n=(p_{kl})_{2n\times 2n}\in\mathcal{M}_{2n}(\mathbb{R})$ be the permutation matrix defined by $p_{k,2k-1}=p_{n+k,2k}=1$ for $k\in\{1,2,\ldots,n\}$ and zero for all other entries. Let $$Q_n=(I_n,0)_{n\times2n}\in \mathcal{M}_{n\times 2n}(\mathbb{R})$$ and
$$Q'_n=(0,I_n)_{n\times2n}\in \mathcal{M}_{n\times 2n}(\mathbb{R}).$$

\section{A framework for quantifying imaginarity of Gaussian channels}

In the imaginarity resource theory of Gaussian channels, free channels are real Gaussian channels, and free superchannels are real Gaussian superchannels, namely, Gaussian superchannels that map real Gaussian channels into real Gaussian channels. So, it is necessary to discuss the structure of real Gaussian superchannels.
Denote
$$\mathcal{RGSC}_n$$
the set of all $n$-mode real Gaussian superchannels.


\subsection{Real Gaussian superchannels}

We first present the structure of real Gaussian superchannels.

Assume that $\Phi(A,O,Y,\bar{\mathbf{d}})\in \mathcal{RGSC}_n$ is any $n$-mode real Gaussian superchannel.
Write
$$P_n\bar{\mathbf{d}}=(\bar{d}_1,\bar{d}_3,\ldots,\bar{d}_{2n-1},|\ \bar{d}_2,\bar{d}_4,\ldots,\bar{d}_{2n} )^{\rm T}=(\bar{\mathbf{d}}_o,\bar{\mathbf{d}}_e)^{\rm T},$$
$$P_nA P_n^{\rm T}=\left(
\begin{array}{cc}
	A_{11} & A_{12}\\
	A_{21} & A_{22} \\
\end{array}
\right),\ \ P_nO P_n^{\rm T}=\left(
\begin{array}{cc}
	O_{11} & O_{12}\\
	O_{21} & O_{22} \\
\end{array}
\right)$$and
$$P_nY P_n^{\rm T}=\left(
\begin{array}{cc}
	Y_{11} & Y_{12}\\
	Y_{12}^{\rm T} & Y_{22} \\
\end{array}
\right).$$

 For any $n$-mode real Gaussian channel $\phi=\phi(T,N,\mathbf{d})$, by Eqs.\eqref{eq2}-\eqref{eq4}, one has   
	$$P_n\mathbf{d}=(\mathbf{d}_o,0)^{\rm T},\ \ P_nN P_n^{\rm T}=\left(
	\begin{array}{cc}
		N_{11} & 0\\
		0 & N_{22} \\
	\end{array}
	\right),$$
	$$P_nTP_n^{\rm T}=\left(
	\begin{array}{cc}
		T_{11} & T_{12}\\
		0 &  0\\
	\end{array}
	\right)\ \ {\rm or}\ \ \left(
	\begin{array}{cc}
		T_{11} & 0\\
		0 & T_{22} \\
	\end{array}
	\right)$$
with  $\mathbf{d}_o=({d}_1,{d}_3,\ldots,{d}_{2n-1})^{\rm T}\in\mathbb{R}^n$ and $N_{11}, N_{22}, T_{11}$, $N_{12}, T_{22}\in\mathcal{M}_{n}(\mathbb{R})$.
For the Gaussian channel $\Phi(\phi)=\phi'(T',N',\mathbf{d}')$,
by	Eq.\eqref{eq5}, one gets
	\begin{equation}\label{eq10}
		P_n\mathbf{d}'=(P_nAP_n^{\rm T})(P_n\mathbf{d})+P_n\bar{\mathbf{d}}=\left(
		\begin{array}{c}
			A_{11}\mathbf{d}_o+\bar{\mathbf{d}}_o \\
			A_{21}\mathbf{d}_o+\bar{\mathbf{d}}_e\\
		\end{array}
		\right),
	\end{equation}
\begin{widetext}
	\begin{equation}\label{eq11}
		\begin{array}{rl}P_nN'P_n^{\rm T}&=(P_nAP_n^{\rm T})(P_nNP_n^{\rm T})(P_nA^{\rm T}P_n^{\rm T})+P_nYP_n^{\rm T}\\
			=&\left(
			\begin{array}{cc}
				A_{11}N_{11}A_{11}^{\rm T}+A_{12}N_{22}A_{12}^{\rm T}+Y_{11}&
				A_{11}N_{11}A_{21}^{\rm T}+A_{12}N_{22}A_{22}^{\rm T}+Y_{12}\\
				A_{21}N_{11}A_{11}^{\rm T}+A_{22}N_{22}A_{12}^{\rm T}+Y_{12}^{\rm T} &
				A_{21}N_{11}A_{21}^{\rm T}+A_{22}N_{22}A_{22}^{\rm T}+Y_{22}\\
			\end{array}
			\right),\end{array}
	\end{equation}
	\begin{equation}\label{eq12}
		\begin{array}{rl}P_nT'P_n^{\rm T}&=(P_nAP_n^{\rm T})(P_nTP_n^{\rm T})(P_n\Sigma_nP_n^{\rm T})(P_nO^{\rm T}P_n^{\rm T})(P_n\Sigma_nP_n^{\rm T})\\
			=&\left(
			\begin{array}{cc}
				A_{11}T_{11}O_{11}^{\rm T}-A_{11}T_{12}O_{12}^{\rm T}&
				A_{11}T_{12}O_{22}^{\rm T}-A_{11}T_{11}O_{21}^{\rm T}\\
				A_{21}T_{11}O_{11}^{\rm T}-A_{21}T_{12}O_{12}^{\rm T} &
				A_{21}T_{12}O_{22}^{\rm T}-A_{21}T_{11}O_{21}^{\rm T}\\
			\end{array}
			\right)\end{array}
	\end{equation}
	or
	\begin{equation}\label{eq13}
		\begin{array}{rl}P_nT'P_n^{\rm T}
			=\left(
			\begin{array}{cc}
				A_{11}T_{11}O_{11}^{\rm T}-A_{12}T_{22}O_{12}^{\rm T}&
				A_{12}T_{22}O_{22}^{\rm T}-A_{11}T_{11}O_{21}^{\rm T}\\
				A_{21}T_{11}O_{11}^{\rm T}-A_{22}T_{22}O_{12}^{\rm T} &
				A_{22}T_{22}O_{22}^{\rm T}-A_{21}T_{11}O_{21}^{\rm T}\\
			\end{array}
			\right).\end{array}
	\end{equation}
	\end{widetext}	Note that $\phi'(T',N',\mathbf{d}')$ is also a real Gaussian channel.  By  Eq.\eqref{eq2},  Eqs.\eqref{eq10}-\eqref{eq11} imply that $$A_{21}\mathbf{d}_o+\bar{\mathbf{d}}_e=0$$ and $$A_{11}N_{11}A_{21}^{\rm T}+A_{12}N_{22}A_{22}^{\rm T}+Y_{12}=0.$$ As $\phi$ is arbitrary, varying different $\mathbf{d}_o\in\mathbb{R}^n$ and $N_{22}\in\mathcal{M}_{n}(\mathbb{R})$, the above two equations yield $$A_{21}=0, \  \bar{\mathbf{d}}_e=0,\  Y_{12}=0;\ \ A_{12}=0\ {\rm or}\ A_{22}=0.$$
If $A_{22}=0$,
both Eqs.\eqref{eq12}-\eqref{eq13} satisfy the form of Eq.\eqref{eq3}, and $\phi'(T',N',\mathbf{d}')$ is clearly a real Gaussian channel.
	If  $A_{12}=0$,  it is obvious that Eq.\eqref{eq12} satisfies the condition \eqref{eq3}. For Eq.\eqref{eq13}, it reduces to
	$$\begin{array}{rl}P_nT'P_n^{\rm T}
		=\left(
		\begin{array}{cc}
			A_{11}T_{11}O_{11}^{\rm T}&
			-A_{11}T_{11}O_{21}^{\rm T}\\
			-A_{22}T_{22}O_{12}^{\rm T} &
			A_{22}T_{22}O_{22}^{\rm T}\\
		\end{array}
		\right),\end{array}$$
	which imply either
	$$A_{11}T_{11}O_{21}^{\rm T}=0,\ \ A_{22}T_{22}O_{12}^{\rm T}=0,$$ or $$A_{22}T_{22}O_{12}^{\rm T}=0, \\ A_{22}T_{22}O_{22}^{\rm T}=0.$$ By taking different $T_{11}, T_{22}\in\mathcal{M}_{n}(\mathbb{R})$, we obtain
	$$O_{21}=O_{12}=0,\ {\rm or}\ A_{11}=O_{12}=0,\ {\rm or}\ O_{12}=O_{22}=0.$$
	As $OO^{\rm T} = I_{2n}$, it can not occure that $O_{12}=O_{22}=0$; and $A_{11}=O_{12}=0$ implies  $O_{21}=0$.

Combining the above discussion, we obtain the structure of any $n$-mode Gaussian superchannel.

\begin{theorem}\label{thm1}
Assume that $\Phi=\Phi(A,O,Y,\bar{\mathbf{d}})$ is any $n$-mode Gaussian superchannel. Then $\Phi$  is real if and only if $A$, $O$, $Y$ and $\bar{\mathbf{d}}$ satisfy the following conditions
\begin{equation}\label{eq7}
\bar{d}_{2k}=0, \ \ y_{2k-1,2l}=0\ \ {\rm for} \ \ k,l\in\{1,2,\ldots,n\},
\end{equation}
and either
\begin{equation}\label{eq8}
a_{2k,2l-1}=a_{2k,2l}=0 \ {\rm for} \ \ k,l\in\{1,2,\ldots,n\}
\end{equation}
or
\begin{equation}\label{eq9}
a_{2k-1,2l}=a_{2k,2l-1}=0, \ \ o_{2k-1,2l}=o_{2k,2l-1}=0
\end{equation}
 for $k,l\in\{1,2,\ldots,n\}$.\end{theorem}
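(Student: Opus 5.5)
Both directions of the equivalence are worked in the permuted block picture, i.e. after conjugating by $P_n$. There, reality of a Gaussian channel $\phi(T,N,\mathbf{d})$ means: $\mathbf{d}$ has vanishing even part, $N$ is block diagonal, and $P_nTP_n^{\rm T}$ is either block upper triangular with zero second block row (Eq.\eqref{eq3}) or block diagonal (Eq.\eqref{eq4}). In these terms the conditions of Theorem \ref{thm1} read: $\bar{\mathbf{d}}_e=0$ and $Y_{12}=0$ (Eq.\eqref{eq7}); together with either $A_{21}=A_{22}=0$ (Eq.\eqref{eq8}), or $A_{12}=A_{21}=0$ and $O_{12}=O_{21}=0$ (Eq.\eqref{eq9}).

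\textbf{Sufficiency.} This is a direct substitution into Eqs.\eqref{eq10}--\eqref{eq13}.
\begin{itemize}
\item Under Eq.\eqref{eq7}, in either case one has $A_{21}=0$, so the even part of $\mathbf{d}'$ is $\bar{\mathbf{d}}_e=0$.
\item The off-diagonal block of $N'$ is $A_{11}N_{11}A_{21}^{\rm T}+A_{12}N_{22}A_{22}^{\rm T}+Y_{12}$. This vanishes because $A_{21}=0$, $Y_{12}=0$, and either $A_{22}=0$ or $A_{12}=0$.
\item For $T'$ under Eq.\eqref{eq8}: the second block row of $P_nAP_n^{\rm T}$ is zero, so the second block row of $P_nT'P_n^{\rm T}$ is zero in both Eq.\eqref{eq12} and Eq.\eqref{eq13}. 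Hence $\phi'$ is even completely real.
\item For $T'$ under Eq.\eqref{eq9}: since $A$ and $O$ are block diagonal, Eq.\eqref{eq12} has zero second block row and Eq.\eqref{eq13} is block diagonal. So $\phi'$ is real in each case.
\end{itemize}
One must also check that $T'$, $N'$, $\mathbf{d}'$ define a legitimate channel. This is automatic, since $\Phi$ is assumed to be a Gaussian superchannel.

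\textbf{Necessity.} This follows the analysis preceding the theorem, made rigorous by choosing suitable real test channels.
\begin{enumerate}
\item Take $\phi$ to be a real channel with $T=0$ and $N$ large diagonal (so the positivity constraint holds) and arbitrary $\mathbf{d}_o$. The even part of $\mathbf{d}'$ must vanish, giving $A_{21}\mathbf{d}_o+\bar{\mathbf{d}}_e=0$ for all $\mathbf{d}_o$. Hence $\bar{\mathbf{d}}_e=0$ and $A_{21}=0$.
\item Vary $N_{11},N_{22}$ over scalar multiples of $I$, e.g. $N_{22}=cI$, staying within admissible channels. The off-diagonal block gives $cA_{12}A_{22}^{\rm T}+Y_{12}=0$ for all large $c$. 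So $Y_{12}=0$ and $A_{12}A_{22}^{\rm T}=0$.
\end{enumerate}

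\textbf{Expected obstacle.} The paper passes from $A_{12}A_{22}^{\rm T}=0$ to ``$A_{12}=0$ or $A_{22}=0$'' by varying $N_{22}$ over general positive matrices: $A_{12}NA_{22}^{\rm T}=0$ for all $N\ge0$ (rank-one $N=vv^{\rm T}$ suffices) forces $A_{12}v\,(A_{22}v)^{\rm T}=0$ for every $v$. To finish, use the bilinear identity $(A_{12}x)(A_{22}y)^{\rm T}+(A_{12}y)(A_{22}x)^{\rm T}=0$. If some $A_{12}x\ne0$ and some $A_{22}y\ne0$, a generic combination $v=x+ty$ yields a contradiction. This is the step needing the most care.

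\textbf{Completing necessity.}
\begin{itemize}
\item If $A_{22}=0$, we are in case Eq.\eqref{eq8} together with $A_{21}=0$.
\item If $A_{12}=0$ (and $A_{22}\ne0$), feed in covariant real channels with $T=T_{11}\oplus T_{22}$. Such channels are admissible if $N$ is taken large, since the constraint $N+i\Delta_n-iT\Delta_nT^{\rm T}\ge0$ can always be met for large $N$. Then $T'$ in Eq.\eqref{eq13} has off-diagonal blocks $-A_{11}T_{11}O_{21}^{\rm T}$ and $-A_{22}T_{22}O_{12}^{\rm T}$, and reality requires one of the two patterns.
\end{itemize}
Choosing $T_{22}$ with $A_{22}T_{22}\neq0$ gives $O_{12}=0$ in either pattern (in the second, it gives $O_{22}=0$ as well, which is impossible by orthogonality). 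Orthogonality of $O$ with $O_{12}=0$ then makes $O_{11}$ invertible, and $O_{11}O_{21}^{\rm T}=0$ forces $O_{21}=0$. This gives Eq.\eqref{eq9}.
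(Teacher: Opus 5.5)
Your proposal is correct and follows essentially the paper's own route, namely the block analysis under $P_n$ carried out just before Theorem~\ref{thm1}. It also tightens two steps the paper only asserts: you give the union-of-kernels argument for ``$A_{12}=0$ or $A_{22}=0$'', and you note that $A_{22}T_{22}O_{12}^{\rm T}=0$ is common to both admissible patterns, so $O_{12}=0$ together with orthogonality of $O$ already forces $O_{21}=0$, avoiding the paper's three-way case split.

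Two wordings need tightening. First, a single $T_{22}$ with $A_{22}T_{22}\neq0$ does not give $O_{12}=0$ unless $A_{22}$ is invertible. Instead, let $T_{22}=e_ju^{\rm T}$ range over all $u$, with $j$ chosen so that $A_{22}e_j\neq0$. Second, $N_{22}=vv^{\rm T}$ is not itself admissible, so use $N_{22}=cI+vv^{\rm T}$ and subtract the $N_{22}=cI$ relation.
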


Next, we consider the structure of imaginarity breaking Gaussian superchannels. Here, {\it we say that an $n$-mode  Gaussian superchannel $\Phi$ is imaginarity breaking if $\Phi(\phi)$ is always $n$-mode real Gaussian channel for any $n$-mode Gaussian channels $\phi$.}


\begin{theorem}\label{re}
	Any $n$-mode  Gaussian superchannel $\Phi=\Phi(A,O,Y,\bar{\mathbf{d}})$ is imaginarity breaking if and only if $\Phi$ satisfies Eqs.\eqref{eq7}-\eqref{eq8}.
\end{theorem}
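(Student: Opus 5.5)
Both directions go through the block decomposition via $P_n$ used before Theorem \ref{thm1}, now with $\phi=\phi(T,N,\mathbf{d})$ an \emph{arbitrary} Gaussian channel. Write $P_n\mathbf{d}=(\mathbf{d}_o,\mathbf{d}_e)^{\rm T}$, $P_nNP_n^{\rm T}=(N_{ij})$ and $P_nTP_n^{\rm T}=(T_{ij})$, with no vanishing blocks assumed.

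For sufficiency, assume \eqref{eq7}--\eqref{eq8}. Then $A_{21}=A_{22}=0$, $\bar{\mathbf{d}}_e=0$ and $Y_{12}=0$. By \eqref{eq5}:
\begin{itemize}
\item the even part of $\mathbf{d}'$ is $A_{21}\mathbf{d}_o+A_{22}\mathbf{d}_e+\bar{\mathbf{d}}_e=0$;
\item the off-diagonal block of $P_nN'P_n^{\rm T}$ is $(A_{11},A_{12})(P_nNP_n^{\rm T})(A_{21},A_{22})^{\rm T}+Y_{12}=0$;
\item the second block row of $P_nT'P_n^{\rm T}=(P_nAP_n^{\rm T})(P_nTP_n^{\rm T})(P_n\Sigma_nO^{\rm T}\Sigma_nP_n^{\rm T})$ is zero, since the second block row of $P_nAP_n^{\rm T}$ is zero.
\end{itemize}
So $\Phi(\phi)$ satisfies \eqref{eq2}--\eqref{eq3}, i.e.\ it is real (in fact completely real). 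This holds without any condition on $O$.

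For necessity, suppose $\Phi(\phi)$ is real for every $\phi\in\mathcal{GC}_n$. In particular $\Phi$ maps real Gaussian channels to real ones, so by Theorem \ref{thm1} we already have \eqref{eq7}. It remains to show that $A_{21}=A_{22}=0$; this also rules out the alternative \eqref{eq9} except where it coincides with \eqref{eq8}.

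Apply the displacement condition in \eqref{eq2} to $\Phi(\phi)$: $A_{21}\mathbf{d}_o+A_{22}\mathbf{d}_e=0$ for all $\mathbf{d}\in\mathbb{R}^{2n}$. Any displacement is admissible, since $\phi(I_{2n},0,\mathbf{d})$ is a Gaussian channel. Letting $\mathbf{d}$ range over $\mathbb{R}^{2n}$ forces $A_{21}=A_{22}=0$, which is \eqref{eq8}.

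I expect no real obstacle; the only point needing care is checking that the test channels are genuinely Gaussian, i.e.\ satisfy $N+i\Delta_n-iT\Delta_nT^{\rm T}\ge0$. Displacement channels do trivially. If one wanted to argue instead via $N$, one could use $\phi(0,N,0)$ with $N\ge I_{2n}$ arbitrary symmetric; the displacement route avoids this. Finally I would note that \eqref{eq8} gives real $\Phi(\phi)$ of the completely real type \eqref{eq3}, consistent with the notion of imaginarity breaking.
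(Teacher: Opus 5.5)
Your proof is correct and is essentially the paper's argument. Sufficiency is the same block computation, and you correctly land on the completely real form \eqref{eq2}--\eqref{eq3}. For necessity, the paper invokes the full dichotomy of Theorem \ref{thm1} and uses arbitrary displacements only to kill $A_{22}$ in the \eqref{eq9} case. You instead take just \eqref{eq7} from Theorem \ref{thm1} and get $A_{21}=A_{22}=0$ in one step from the admissible test channels $\phi(I_{2n},0,\mathbf{d})$, which is a slightly more direct version of the same idea.

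Your closing aside about an $N$-based route should be dropped, because $N$-tests alone cannot replace the displacement step. For example, $P_nAP_n^{\rm T}=0\oplus A_{22}$ with $A_{22}\neq0$ and $Y_{12}=0$ passes every $N$-test, yet the superchannel is not imaginarity breaking.
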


\begin{proof}
Assume that $\Phi=\Phi(A,O,Y,\bar{\mathbf{d}})$ is any $n$-mode Gaussian superchannel and $\phi=\phi(T,N,\mathbf{d})$ is any $n$-mode Gaussian channel. Write
$$P_n A P_n^{\rm T}= \left(
                    \begin{array}{cc}
                      A_{11} & A_{12} \\
                      A_{21} & A_{22} \\
                    \end{array}
                  \right),\ \
P_n O P_n^{\rm T}= \left(
                    \begin{array}{cc}
                      O_{11} & O_{12} \\
                      O_{21} & O_{22} \\
                    \end{array}
                  \right)$$\ \
$$P_n Y P_n^{\rm T}= \left(
                    \begin{array}{cc}
                      Y_{11} & Y_{12} \\
                      Y_{12}^{\rm T} & Y_{22} \\
                    \end{array}
                  \right),\ \
P_n\bar{\mathbf{d}}= \left(
               \begin{array}{c}
                 \bar{\mathbf{d}}_o  \\
                 \bar{\mathbf{d}}_e \\
               \end{array}
             \right);$$
$$P_nTP_n^{\rm T}=\left(
	\begin{array}{cc}
		T_{11} & T_{12}\\
		T_{21} & T_{22}\\
	\end{array}
	\right),\\
\ P_nN P_n^{\rm T}=\left(
	\begin{array}{cc}
		N_{11} & N_{12}\\
		N_{12}^{\rm T} & N_{22} \\
	\end{array}
	\right),$$
$$P_n\mathbf{d}= \left(
               \begin{array}{c}
                 \mathbf{d}_o  \\
                 \mathbf{d}_e \\
               \end{array}
             \right),$$
where $\bar{\mathbf{d}}_o,\bar{\mathbf{d}}_e, \mathbf{d}_o, \mathbf{d}_e\in \mathbb{R}^{n}$ and $A_{kl}, O_{kl},Y_{kl},T_{kl},N_{kl}\in\mathcal{M}_{n}(\mathbb{R})$, $k,l\in\{1,2\}$.

On the one hand, if $\Phi$ is imaginarity breaking, then  $\Phi$ is obviously  a  real Gaussian superchannel. By Theorem \ref{thm1}, $\Phi$ must satisfy either Eqs.\eqref{eq7}-\eqref{eq8}, or Eqs.\eqref{eq7} and \eqref{eq9}.
If $\Phi$ satisfies Eqs.\eqref{eq7} and \eqref{eq9}, then
$$
P_n A P_n^{\rm T}= \left(
                    \begin{array}{cc}
                      A_{11} & 0 \\
                      0      & A_{22} \\
                    \end{array}
                  \right),\ \
P_n\bar{\mathbf{d}}= \left(
               \begin{array}{c}
                 \bar{\mathbf{d}}_o  \\
                 0 \\
               \end{array}
             \right).
                  $$
As $\Phi(\phi)=\phi'(T',N',\mathbf{d}')$ is real, by Eq.\eqref{eq2},
we have $$\begin{array}{rl}
	P_n\mathbf{d}'=&(P_nAP_n^{\rm T})(P_n\mathbf{d})+P_n\bar{\mathbf{d}}\\=&\left(
\begin{array}{c}
	A_{11}\mathbf{d}_o+\bar{\mathbf{d}}_o \\
	A_{22} \mathbf{d}_e \\
\end{array}
\right)=\left(
\begin{array}{c}
	A_{11}\mathbf{d}_o+\bar{\mathbf{d}}_o \\
0
\end{array}
\right).\end{array}$$
This implies $A_{22} \mathbf{d}_e=0$. It follows from the arbitrariness of $\mathbf{d}_e$ that  $A_{22}=0$, and thus Eq.\eqref{eq9}   reduces to Eq.\eqref{eq8}.

On the other hand, if  $\Phi$ satisfies Eqs.\eqref{eq7}-\eqref{eq8}, then
$$P_n A P_n^{\rm T}= \left(
                    \begin{array}{cc}
                      A_{11} & A_{12} \\
                      0 & 0 \\
                    \end{array}
                  \right),\ \
P_n\bar{\mathbf{d}}= \left(
               \begin{array}{c}
                 \bar{\mathbf{d}}_o  \\
                 0 \\
               \end{array}
             \right)$$
and$$P_n Y P_n^{\rm T}= \left(
                    \begin{array}{cc}
                      Y_{11} & 0 \\
                      0 & Y_{22} \\
                    \end{array}
                  \right).$$
Thus,  for any $n$-mode Gaussian channel $\phi$, $\Phi(\phi)=\phi'(T',N',\mathbf{d}')$ satisfies
\begin{equation*}
P_n\mathbf{d}'=\left(
\begin{array}{c}
A_{11}\mathbf{d}_o+A_{12}\mathbf{d}_e+\bar{\mathbf{d}}_o \\
0\\
\end{array}\right),
\end{equation*}
\begin{equation*}
\begin{array}{rl}P_nN'P_n^{\rm T}&=P_n(ANA^{\rm T}+Y)P_n^{\rm T}
=\left(\begin{array}{cc}
D_1&0\\
0 &Y_{22}\\
\end{array}
\right)\end{array}
\end{equation*}
and \begin{equation*}
\begin{array}{rl}P_nT'P_n^{\rm T}&=P_n(AT\Sigma_nO^{\rm T}\Sigma_n)P_n^{\rm T}
=\left(\begin{array}{cc}
D_2&D_3\\
0 &0\\
\end{array}\right),\end{array}\end{equation*}
where
$$\begin{array}{rl}
	D_1=&A_{11}N_{11}A_{12}^{\rm T}+A_{12}N_{12}^{\rm T}A_{11}^{\rm T}\\
	&+A_{11}N_{12}A_{12}^{\rm T}+A_{12}N_{22}A_{12}^{\rm T}+Y_{11},\end{array}$$ $$\begin{array}{rl}
	D_2=&A_{11}T_{11}O_{11}^{\rm T}+A_{12}T_{21}O_{11}^{\rm T}\\
	&-A_{11}T_{12}O_{12}^{\rm T}-A_{12}T_{22}O_{12}^{\rm T}\end{array}$$
	and
$$\begin{array}{rl}
	D_3=&A_{11}T_{12}O_{22}^{\rm T}+A_{12}T_{22}O_{22}^{\rm T}\\
	&-A_{11}T_{11}O_{21}^{\rm T}-A_{12}T_{21}O_{21}^{\rm T}.\end{array}$$
It follows from Eqs. \eqref{eq2} and \eqref{eq4} that $\phi'(T',N',\mathbf{d}')$ is an $n$-mode real Gaussian channel.
\end{proof}

\subsection{Imaginarity measures of
 Gaussian channels}

An  imaginarity measure  for Gaussian channels should satisfy two fundamental properties: faithfulness and monotonicity under free superchannels.

\begin{definition}\label{Df2}
Assume that
 ${\mathcal I}^{GC}:\mathcal {GC}_n\rightarrow[0,+\infty)$ is a functional. We say that ${\mathcal I}^{GC}$ is an imaginarity measure if it  satisfies the following properties:

{\rm (GC1) (Faithfulness)}: ${\mathcal I}^{GC}(\phi)\ge 0$ holds for all $n$-mode Gaussian channels $\phi\in \mathcal {GC}_n$,  and ${\mathcal I}^{GC}(\phi)= 0$ if and only if $\phi$ is  real.

{\rm (GC2) (Monotonicity)}: ${\mathcal I}^{GC}(\Phi(\phi))\le {\mathcal I}^{GC}(\phi)$ holds for all $n$-mode real Gaussian superchannels $\Phi\in\mathcal {RGSC}_n$ and all $n$-mode Gaussian channels $\phi\in \mathcal {GC}_n$.
\end{definition}

Next, we provide some approaches to construct imaginarity measures for Gaussian channels.

{\bf Imaginarity measures induced by imaginarity measure of Gaussian states}

Let ${\mathcal I}^G$ be an imaginarity measure of $n$-mode Gaussian states \cite{Xu1}, that is, ${\mathcal I}^G$ satisfies:

 (1) ${\mathcal I}^G(\rho)\ge0$ for all $n$-mode Gaussian states $\rho$, and ${\mathcal I}^G(\rho)=0$ if and only if $\rho$ is real;

 (2) ${\mathcal I}^G(\phi(\rho))\le {\mathcal I}^G(\rho)$ holds for all $n$-mode real Gaussian channels $\phi$ and all $n$-mode Gaussian states $\rho$.

\begin{theorem}\label{thm4} For any imaginarity measure ${\mathcal I}^G$ of $n$-mode Gaussian states, let
$${\mathcal I}_s^{GC}(\phi) := \sup_{\rho\in \mathcal{R}\mathcal{G}\mathcal{S}_n} {\mathcal I}^G\bigl(\phi(\rho)\bigr), \ \ \forall \ \phi\in {\mathcal {GC}}_n.$$
Then, ${\mathcal I}_s^{GC}$ is an  imaginarity measure for Gaussian channels.
\end{theorem}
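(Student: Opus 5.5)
The plan is to verify the two defining properties (GC1) and (GC2) of Definition \ref{Df2} directly. For (GC2) we use the decomposition $\Phi(\phi)=\phi_2\circ\phi\circ\phi_1$ of Gaussian superchannels recalled in Section II, together with the structural results Theorem \ref{thm1} and Theorem \ref{re}.

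\emph{Faithfulness.} Nonnegativity of ${\mathcal I}_s^{GC}$ is inherited from property (1) of ${\mathcal I}^G$.
\begin{itemize}
\item If $\phi\in\mathcal{RGC}_n$, then by definition $\phi$ sends every $\rho\in\mathcal{RGS}_n$ to a real Gaussian state. Hence ${\mathcal I}^G(\phi(\rho))=0$ for all such $\rho$, and the supremum is $0$.
\item Conversely, suppose ${\mathcal I}_s^{GC}(\phi)=0$. Then ${\mathcal I}^G(\phi(\rho))=0$ for every real Gaussian state $\rho$. Since $\phi$ is Gaussian, $\phi(\rho)$ is a Gaussian state, so faithfulness of ${\mathcal I}^G$ makes it real. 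Thus $\phi$ maps real Gaussian states into real Gaussian states, which is exactly the definition of a real Gaussian channel.
\end{itemize}

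\emph{Monotonicity.} Let $\Phi=\Phi(A,O,Y,\bar{\mathbf d})\in\mathcal{RGSC}_n$. By Theorem \ref{thm1}, Eq.\eqref{eq7} holds together with either Eq.\eqref{eq8} or Eq.\eqref{eq9}. We split into these two cases.

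\emph{Case Eq.\eqref{eq8}.} By Theorem \ref{re}, $\Phi$ is imaginarity breaking, so $\Phi(\phi)$ is real. By the faithfulness already shown, ${\mathcal I}_s^{GC}(\Phi(\phi))=0\le{\mathcal I}_s^{GC}(\phi)$.

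\emph{Case Eq.\eqref{eq9}.} Write $\Phi(\phi)=\phi_2\circ\phi\circ\phi_1$ with $\phi_1=\phi_1(\Sigma_nO^{\rm T}\Sigma_n,0,0)$ and $\phi_2=\phi_2(A,Y,\bar{\mathbf d})$.
\begin{itemize}
\item $\phi_2$ is real: Eq.\eqref{eq7} gives exactly condition \eqref{eq2} for $(Y,\bar{\mathbf d})$, and the $a$-part of Eq.\eqref{eq9} is condition \eqref{eq4} for $T_2=A$.
\item $\phi_1$ is real: $N_1=0$ and $\mathbf d_1=0$ give \eqref{eq2} trivially. The conditions $o_{2k-1,2l}=o_{2k,2l-1}=0$ say that $O$, hence $O^{\rm T}$, has no entries linking odd and even indices. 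Conjugating by the diagonal matrix $\Sigma_n$ only changes signs of entries and preserves this zero pattern, so $T_1$ satisfies \eqref{eq4}.
\end{itemize}
Now fix $\rho\in\mathcal{RGS}_n$. Then $\rho':=\phi_1(\rho)\in\mathcal{RGS}_n$, since $\phi_1$ is real. Using property (2) of ${\mathcal I}^G$ with the real channel $\phi_2$,
$${\mathcal I}^G\bigl(\Phi(\phi)(\rho)\bigr)={\mathcal I}^G\bigl(\phi_2(\phi(\rho'))\bigr)\le{\mathcal I}^G\bigl(\phi(\rho')\bigr)\le{\mathcal I}_s^{GC}(\phi).$$
Taking the supremum over $\rho$ yields (GC2).

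The main obstacle is the first case. There $O$ is unconstrained, so $\phi_1$ need not be real and the chain of inequalities above is unavailable. Theorem \ref{re} removes this difficulty entirely. The only other point needing care is checking that the $\Sigma_n$-conjugation in $T_1$ preserves the odd/even block pattern of $O^{\rm T}$, which holds because $\Sigma_n$ is diagonal.
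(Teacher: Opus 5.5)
Your proof is correct and follows the paper's argument. It uses the same case split from Theorem~\ref{thm1} and the same appeal to Theorem~\ref{re} for the case of Eqs.~\eqref{eq7}--\eqref{eq8}; in the case of Eqs.~\eqref{eq7} and \eqref{eq9} it uses the same decomposition $\Phi(\phi)=\phi_2\circ\phi\circ\phi_1$ with $\phi_1,\phi_2$ real and the same chain of inequalities, differing only in that you write out the faithfulness argument and the zero-pattern check for $T_1=\Sigma_nO^{\rm T}\Sigma_n$, which the paper treats as obvious or handles by a block computation.
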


{\it Proof.} Obviously,  ${\mathcal I}^{GC}$  satisfies the condition (GC1).  For the condition (GC2), take any  $n$-mode real Gaussian superchannel $\Phi(A,O,Y,\bar{\mathbf{d}})$ and any $n$-mode Gaussian channel $\phi$.
Then $\Phi(\phi) = \phi_2 \circ \phi \circ \phi_1$, where $\phi_1=\phi_1(T_1, N_1, \mathbf{d}_1), \phi_2=\phi_2(T_2, N_2, \mathbf{d}_2)\in{\mathcal {GC}}_n$ with $T_1=\Sigma_n O^{\rm T}\Sigma_n, \ N_1 = 0, \ \mathbf{d}_1 = 0$,
$T_2 = A, \ N_2 = Y, \ \mathbf{d}_2 = \bar{\mathbf{d}}$.
If $\Phi$ satisfies Eqs.\eqref{eq7}-\eqref{eq8}, then $\Phi(\phi)$ is a  real Gaussian channel by Theorem \ref{re}, and so ${\mathcal I}_s^{GC}(\Phi(\phi))=0\le {\mathcal I}_s^{GC}(\phi)$. If $\Phi$ satisfies Eqs.\eqref{eq7} and \eqref{eq9}, a direct calculation gives
\begin{equation*}\begin{array}{rl}
P_nT_1 P_n^{\rm T}& =(P_n\Sigma_nP_n^{\rm T})(P_nO^{\rm T}P_n^{\rm T})(P_n\Sigma_nP_n^{\rm T})\\
& =\left(
                      \begin{array}{cc}
                        O_{11}^{\rm T} & 0\\
                        0 & O_{22}^{\rm T} \\
                      \end{array}
                    \right),
\end{array}\end{equation*}
$$P_nT_2 P_n^{\rm T}=P_nAP_n^{\rm T}=\left(
                      \begin{array}{cc}
                        A_{11} & 0\\
                        0 & A_{22} \\
                      \end{array}
                    \right),$$
$$P_nN_2 P_n^{\rm T}=P_nYP_n^{\rm T}=\left(
                      \begin{array}{cc}
                        Y_{11} & 0\\
                        0 & Y_{22}\\
                      \end{array}
                    \right)$$
and
$$P_n\mathbf{d}_2=P_n\bar{\mathbf{d}}=\left(
                      \begin{array}{c}
                        \bar{\mathbf{d}}_{o} \\
                        0 \\
                      \end{array}
                    \right).$$
It follows from Eqs.\eqref{eq2}-\eqref{eq4} that both $\phi_1$ and $\phi_2$ are real Gaussian channels.
So\begin{equation*}\begin{array}{rl}
{\mathcal I}_s^{GC}(\Phi(\phi))&= {\mathcal I}^{GC}(\phi_2 \circ \phi \circ \phi_1)\\
&=\sup\limits_{\rho\in \mathcal{R}\mathcal{G}\mathcal{S}_n} {\mathcal I}^{G}\bigl(\phi_2 \circ \phi \circ \phi_1(\rho)\bigr)\\
&\le \sup\limits_{\rho\in \mathcal{R}\mathcal{G}\mathcal{S}_n} {\mathcal I}^{G}\bigl(\phi \circ \phi_1(\rho)\bigr)\\
&\le \sup\limits_{\rho\in \mathcal{R}\mathcal{G}\mathcal{S}_n} {\mathcal I}^{G}\bigl(\phi(\rho)\bigr)
= {\mathcal I}_s^{GC}(\phi).
\end{array}\end{equation*}
Hence ${\mathcal I}_s^{GC}$ is an  imaginarity measure for Gaussian channels.\hfill$\Box$

{\bf Two computable imaginarity measures of Gaussian channels}

Note that, in general,  it is in fact difficult to calculate ${\mathcal I}_s^{GC}$ provided in Theorem \ref{thm4}.  So, we here propose two more easily computable imaginary measures of Gaussian channels, which depend  solely on the channels' own parameters.

\begin{theorem}\label{thm5}
For any $n$-mode Gaussian channel $\phi=\phi(T,N,\mathbf{d})\in\mathcal {GC}_n$,
define $\mathcal{I}_d^{GC}(\phi)$ as follows:
\begin{equation}
\label{eqdf8}
\begin{split}
\mathcal{I}_d^{GC}(\phi)
&:= h(\|Q'_n P_n T P_n^{\mathrm{T}} Q_n^{\mathrm{T}}\|_{\mathrm{Tr}}) \\
&+ h(\|Q_n P_n T P_n^{\mathrm{T}} Q_n'^{\mathrm{T}}\|_{\mathrm{Tr}} \|Q'_n P_n T P_n^{\mathrm{T}} Q_n'^{\mathrm{T}}\|_{\mathrm{Tr}}) \\
&+ h(\|Q_n P_n N P_n^{\mathrm{T}} Q_n'^{\mathrm{T}}\|_{\mathrm{Tr}}) + h(\|Q'_n P_n \mathbf{d}\|_1),
\end{split}
\end{equation}
where $\|\cdot\|_{\rm Tr}$ is the trace-norm of matrices, $\|\cdot\|_1$ is the $l_1$-norm of vectors, and $h:[0,+\infty)\rightarrow \{0,1\}$ is the function defined by $h(t)=0$ if $t=0$; $h(t)=1$ if $ t\ne0$.
Then, $\mathcal{I}_d^{GC}$  is an imaginary measure of Gaussian channels.
\end{theorem}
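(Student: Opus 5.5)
The plan is to rewrite $\mathcal{I}_d^{GC}$ in the block notation of Theorem \ref{re} and then check (GC1) and (GC2) one term at a time. Write $P_nTP_n^{\rm T}=(T_{kl})$, $P_nNP_n^{\rm T}=(N_{kl})$ and $P_n\mathbf{d}=(\mathbf{d}_o,\mathbf{d}_e)^{\rm T}$. Since $Q_n$ and $Q'_n$ select the first and second $n$ coordinates respectively, the four terms of Eq.\eqref{eqdf8} become
$$\mathcal{I}_d^{GC}(\phi)=h(\|T_{21}\|_{\rm Tr})+h(\|T_{12}\|_{\rm Tr}\|T_{22}\|_{\rm Tr})+h(\|N_{12}\|_{\rm Tr})+h(\|\mathbf{d}_e\|_1).$$
Because $h$ only records whether its argument is zero, and norms vanish only at $0$, $\mathcal{I}_d^{GC}$ takes values in $\{0,1,2,3,4\}$. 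It is zero exactly when all of the following hold: $T_{21}=0$, $N_{12}=0$, $\mathbf{d}_e=0$, and $T_{12}=0$ or $T_{22}=0$.

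\emph{Faithfulness (GC1).} Non-negativity is immediate. The conditions $\mathbf{d}_e=0$ and $N_{12}=0$ are exactly Eq.\eqref{eq2}. Together with $T_{21}=0$, the case $T_{22}=0$ is Eq.\eqref{eq3} and the case $T_{12}=0$ is Eq.\eqref{eq4}. Hence $\mathcal{I}_d^{GC}(\phi)=0$ if and only if $\phi$ satisfies Eqs.\eqref{eq2}--\eqref{eq4}, that is, if and only if $\phi$ is real.

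\emph{Monotonicity (GC2).} Let $\Phi(A,O,Y,\bar{\mathbf d})\in\mathcal{RGSC}_n$. By Theorem \ref{thm1} there are two cases.

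\emph{Case Eqs.\eqref{eq7}--\eqref{eq8}.} By Theorem \ref{re}, $\Phi(\phi)$ is real for every $\phi$. So $\mathcal{I}_d^{GC}(\Phi(\phi))=0\le\mathcal{I}_d^{GC}(\phi)$ by (GC1).

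\emph{Case Eqs.\eqref{eq7} and \eqref{eq9}.} Here $P_nAP_n^{\rm T}={\rm diag}(A_{11},A_{22})$, $P_nOP_n^{\rm T}={\rm diag}(O_{11},O_{22})$, $Y_{12}=0$ and $\bar{\mathbf d}_e=0$. Also $P_n\Sigma_nP_n^{\rm T}={\rm diag}(I_n,-I_n)$. Using Eq.\eqref{eq5} I will compute the blocks of $\Phi(\phi)=\phi'(T',N',\mathbf d')$:
$$\mathbf d'_e=A_{22}\mathbf d_e,\quad N'_{12}=A_{11}N_{12}A_{22}^{\rm T},\quad T'_{21}=A_{22}T_{21}O_{11}^{\rm T},$$
$$T'_{12}=\pm A_{11}T_{12}O_{22}^{\rm T},\quad T'_{22}=\pm A_{22}T_{22}O_{22}^{\rm T}.$$
Each primed block is the corresponding unprimed block multiplied on the left and right by fixed matrices. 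So whenever an unprimed block vanishes, the primed one vanishes too. This gives, term by term,
$$h(\|T'_{21}\|_{\rm Tr})\le h(\|T_{21}\|_{\rm Tr}),\quad h(\|N'_{12}\|_{\rm Tr})\le h(\|N_{12}\|_{\rm Tr}),\quad h(\|\mathbf d'_e\|_1)\le h(\|\mathbf d_e\|_1).$$
For the product term: if $T_{12}=0$ or $T_{22}=0$, then $T'_{12}=0$ or $T'_{22}=0$ respectively, so $h(\|T'_{12}\|_{\rm Tr}\|T'_{22}\|_{\rm Tr})\le h(\|T_{12}\|_{\rm Tr}\|T_{22}\|_{\rm Tr})$. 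Summing the four inequalities gives (GC2).

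The only delicate step is the block computation of $T'=AT\Sigma_nO^{\rm T}\Sigma_n$: conjugating by $\Sigma_n$ flips the signs of the off-diagonal blocks of $O^{\rm T}$. These signs do not matter, because $h$ only detects whether a block is zero. What matters is that each primed block depends only on the corresponding unprimed block, with no mixing, and this holds because $A$ and $O$ are block diagonal in this case.
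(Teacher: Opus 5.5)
Your proposal is correct and follows the paper's proof essentially step for step. The shared steps are the block rewriting of $\mathcal{I}_d^{GC}$, faithfulness read off from Eqs.~\eqref{eq2}--\eqref{eq4}, and monotonicity split via Theorem~\ref{thm1} into the imaginarity-breaking case (Theorem~\ref{re}) and the block-diagonal case, where each primed block is a two-sided multiple of the corresponding unprimed block. (The signs you hedge with $\pm$ are in fact all $+$, since $\mathrm{diag}(I_n,-I_n)\,\mathrm{diag}(O_{11}^{\rm T},O_{22}^{\rm T})\,\mathrm{diag}(I_n,-I_n)=\mathrm{diag}(O_{11}^{\rm T},O_{22}^{\rm T})$, but as you note this does not affect the argument.)
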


{\it Proof.}
Write
$$P_nT P_n^{\rm T}= \left(
                    \begin{array}{cc}
                      T_{11}     & T_{12} \\
                      T_{21} & T_{22} \\
                    \end{array}
                  \right),\ \
P_nN P_n^{\rm T}= \left(
                    \begin{array}{cc}
                      N_{11}     & N_{12} \\
                      N^{\rm T}_{12} & N_{22} \\
                    \end{array}
                  \right),$$
where $T_{kl},N_{kl}\in\mathcal{M}_{n}(\mathbb{R})$, $k,l\in\{1,2\}$. A direct calculation yields
\begin{equation}\label{N}
\begin{cases}
Q'_nP_nTP_n^{\rm T}Q_n^{\rm T}=T_{21}, \  Q_nP_nTP_n^{\rm T}Q_n'^{\rm T}=T_{12},\\
Q'_nP_nTP_n^{\rm T}Q_n'^{\rm T}=T_{22}, \  Q_nP_nNP_n^{\rm T}Q_n'^{\rm T}=N_{12},\\
Q_n'P_n \mathbf{d}=(d_2,d_4,\ldots,d_{2n})^{\rm T}\in\mathbb{R}^n.
\end{cases}
\end{equation}
Consequently, Eq.\eqref{eqdf8} can be rewritten as
\begin{equation}\label{eqd9}
\begin{array}{rl}
\mathcal{I}_d^{GC}(\phi)
=&h(\|T_{21}\|_{\rm Tr})+h(\|T_{12}\|_{\rm Tr}\|T_{22}\|_{\rm Tr})\\
+&h(\|N_{12}\|_{\rm Tr})+h(\|Q_n'P_n\mathbf{d}\|_1).\end{array}
\end{equation}
By Eq.\eqref{eqd9} and Eqs.\eqref{eq2}-\eqref{eq4}, one sees that $\mathcal{I}_d^{GC}(\phi)\ge0$ holds for all Gaussian channels $\phi\in\mathcal {GC}_n$, and  $\mathcal{I}_d^{GC}(\phi)=0\Leftrightarrow\phi$ is real. So $\mathcal{I}_d^{GC}$ satisfies the condition (GC1).

Next, take any real Gaussian superchannel $\Phi(A, O, Y, \bar{\mathbf{d}})$  and any Gaussian channel $\phi=\phi(T,N,\mathbf{d})\in\mathcal {GC}_n$.

If $\Phi$ satisfy Eqs.\eqref{eq7}-\eqref{eq8}, $\Phi(\phi)$ is clearly real. As a result, we have $\mathcal{I}_d^{GC}(\Phi(\phi))=0\le\mathcal{I}_d^{GC}(\phi)$.

Now, assume that $\Phi$ satisfy Eq.\eqref{eq7} and Eq.\eqref{eq9}. Then
 $$P_n\bar{\mathbf{d}}=(\mathbf{d}_o,0)^{\rm T},\ \ P_nYP_n^{\rm T}=\left(
\begin{array}{cc}
	Y_{11} & 0 \\
	0 & Y_{22} \\
\end{array}
\right),$$
$$P_nAP_n^{\rm T}=\left(
\begin{array}{cc}
	A_{11} & 0 \\
	0 & A_{22} \\
\end{array}
\right),\ \ P_nOP_n^{\rm T}=\left(
\begin{array}{cc}
	O_{11} & 0 \\
	0 & O_{22} \\
\end{array}
\right),$$
where $\mathbf{d}_o\in{\mathbb R}^{n}$, $Y_{ii}, A_{ii}, O_{ii}\in\mathcal{M}_{n}(\mathbb{R}), i=1,2$. So, for $\Phi(\phi)=\phi'(T',N',\mathbf{d}')$, one gets
	\begin{equation}\label{eq14}
		\begin{array}{rl}P_nT'P_n^{\rm T}=&P_n(AT\Sigma_n O^{\rm T} \Sigma_n)P_n^{\rm T}\\
			=&\left(
			\begin{array}{cc}
				A_{11}T_{11}O_{11}^{\rm T}&
				A_{11}T_{12}O_{22}^{\rm T}\\
				A_{22}T_{21}O_{11}^{\rm T}&
				A_{22}T_{22}O_{22}^{\rm T}\\
			\end{array}
			\right),\end{array}
	\end{equation}
	\begin{equation}\label{eq15}
		\begin{array}{rl} &P_nN'P_n^{\rm T}\\
			= &P_n(ANA^{\rm T}+Y)P_n^{\rm T}\\
			= & \left(\begin{array}{cc}
				A_{11}N_{11}A_{11}^{\rm T}+Y_{11}     & A_{11}N_{12}A_{22}^{\rm T} \\
				A_{22}N_{12}^{\rm T}A_{11}^{\rm T} & A_{22}N_{22}A_{22}^{\rm T}+Y_{22}\\
			\end{array}
			\right)
	\end{array}\end{equation}
	and
	\begin{equation}\label{eq16}
		\begin{array}{rl}Q_n'P_n\mathbf{d}'= & Q_n'P_n(A\mathbf{d}+\bar{\mathbf{d}})\\
			= & Q_n'(P_nAP_n^{\rm T})(P_n\mathbf{d})+Q_n'P_n\bar{\mathbf{d}}\\
			= & A_{22}Q_n'P_n\mathbf{d}.\end{array}\end{equation}
	By Eqs.\eqref{eqdf8}, \eqref{eq14}-\eqref{eq16}, we can derive that	
	\begin{equation}\label{M}
\begin{array}{rl}
		&\mathcal{I}_d^{GC}(\Phi(\phi))\\
		= &h(\|Q'_nP_nT'P_n^{\rm T}Q_n^{\rm T}\|_{\rm Tr})\\
		&+h(\|Q_nP_nT'P_n^{\rm T}Q_n^{'\rm T}\|_{\rm Tr}\|Q'_nP_nT'P_n^{\rm T}Q_n^{'\rm T}\|_{\rm Tr})\\
		& +h(\|Q_nP_nN'P_n^{\rm T}Q_n'^{\rm T}\|_{\rm Tr})+h(\|Q_n'P_n \mathbf{d}'\|_1)\\
		= & h(\|A_{22}T_{21}O_{11}^{\rm T}\|_{\rm Tr})\\&+h(\|A_{11}T_{12}O_{22}^{\rm T}\|_{\rm Tr}\|A_{22}T_{22}O_{22}^{\rm T}\|_{\rm Tr})\\
		& + h(\|A_{11}N_{12}A_{22}^{\rm T}\|_{\rm Tr})+h(\|A_{22}Q_n'P_n \mathbf{d}\|_1)\\
		\le & h(\|T_{21}\|_{\rm Tr})+h(\|T_{12}\|_{\rm Tr}\|T_{22}\|_{\rm Tr})\\&+h(\|N_{12}\|_{\rm Tr})+h(\|Q_n'P_n\bar{\mathbf{d}}\|_1)
		= \mathcal{I}_d^{GC}(\phi).
	\end{array}
	\end{equation}
That is,  $\mathcal{I}_d^{GC}$ satisfies the condition (GC2).

Hence  $\mathcal{I}_d^{GC}$ is a suitable Gaussian imaginary measure for Gaussian channels.\hfill$\Box$

However, for the imaginarity resource theory of Gaussian channels, the free superchannels may need not be all real Gaussian superchannels, but rather a suitable subset $\mathcal{FO}\subseteq\mathcal{RGSC}_n$ such that they are tractable physically and contains sufficient many real Gaussian superchannels, particularly, for  any $\phi\in\mathcal{GC}_n$,
\begin{equation}\label{eqfo}
\Phi(\phi)\in \mathcal{RGC}_n \ \mbox{\rm for all } {\Phi\in \mathcal{FO}} \Longleftrightarrow \phi\in\mathcal{RGC}_n,
\end{equation}
\if false Indeed, for any Gaussian channel $\phi$,
$\sup\limits_{\Phi\in \mathcal{FO}}{\mathcal I}^{GC}(\Phi(\phi))=0$ is equivalent to ${\mathcal I}^{GC}(\Phi(\phi))=0$ for all $\Phi\in \mathcal{FO}$, and in turn, is equivalent to that $\Phi(\phi)$ is a real Gaussian channel for all $\Phi\in \mathcal{FO}$ by the condition  (GC1). {\color{red} Thus, Eq.\eqref{eqfo} implies that  $\Phi(\phi)$ is a real Gaussian channel for any $\Phi\in \mathcal{FO}$  if and only if $\phi$ is a real Gaussian channel. Hence,\fi that is, the set $\mathcal{FO}$ captures all free superchannels. Consequently, for a faithful quantification ${\mathcal I}^{GC}$ of imaginary measure for Gaussian channels, the condition (GC2)  can be reduced to the following condition:

(GC2$'$) (Monotonicity): ${\mathcal I}^{GC}(\Phi(\phi))\le {\mathcal I}^{GC}(\phi)$ holds for all $n$-mode Gaussian superchannels $\Phi\in\mathcal{FO}$ and all $n$-mode Gaussian channels $\phi$.

Let
\begin{equation}\label{eq21}
\mathcal{FO}=\{\Phi(A, O, Y, \bar{\mathbf{d}})\in\mathcal{RGSC}_n:\|A\|=1\}
\end{equation}
and
\begin{equation}\label{eq22}\mathcal{FO}_1=\{\Phi\in\mathcal{FO}:\Phi \ {\rm  satisfy \ Eq.} \eqref{eq7} \ {\rm and}\  {\rm Eq.} \eqref{eq9}\}.\end{equation}
As the identity superchannel $\Phi(A, O, Y, \bar{\mathbf{d}})= \Phi(I, I, 0, 0)$ belongs to $\mathcal{FO}_1\subset \mathcal{FO}$, both $\mathcal{FO}$ and $\mathcal{FO}_1$ satisfy Eq.(\ref{eqfo}) and thus are qualified to be the sets of free operations.

Now  we introduce another  computable imaginary measure of Gaussian channels.

\begin{theorem}\label{thm6}
Let  $\mathcal{I}_c^{GC}$ be the functional on $\mathcal{GC}_n$, defined by
\begin{equation}
\label{eqd5}
\begin{split}
\mathcal{I}_c^{GC}(\phi)&:=\|Q'_nP_nTP_n^{\rm T}Q_n^{\rm T}\|_{\rm Tr}\\
&+\|Q_nP_nTP_n^{\rm T}Q_n^{'\rm T}\|_{\rm Tr}\|Q'_nP_nTP_n^{\rm T}Q_n^{'\rm T}\|_{\rm Tr}\\
&+\|Q_nP_nNP_n^{\rm T}Q_n'^{\rm T}\|_{\rm Tr}+\|Q_n'P_n\mathbf{d}\|_1
\end{split}
\end{equation}
for any $n$-mode Gaussian channel $\phi=\phi(T,N,\mathbf{d})\in\mathcal {GC}_n$.
Then, the following statements are true:

{\rm (1)} $\mathcal{I}_c^{GC}$ satisfies (GC1), that is, for any $\phi\in \mathcal{GC}_n$, $\mathcal{I}_c^{GC}(\phi)=0$ if and only if $\phi\in\mathcal{RGC}_n$;

{\rm (2)} $\mathcal{I}_c^{GC}$ satisfies (GC2$'$), that is, $\mathcal{I}_c^{GC}(\Phi(\phi))\leq \mathcal{I}_c^{GC}(\phi)$ holds for any $\Phi\in\mathcal{FO}$ given in Eq.(\ref{eq21}) (resp. $\Phi\in\mathcal{FO}_1$ given in Eq.(\ref{eq22})) and any $\phi\in \mathcal{GC}_n$,

\end{theorem}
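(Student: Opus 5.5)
The proof follows the pattern of Theorem \ref{thm5}, with the indicator function $h$ removed, so norm bounds now matter. As in Eq.\eqref{N}, write $P_nTP_n^{\rm T}=(T_{kl})$, $P_nNP_n^{\rm T}=(N_{kl})$ and $Q_n'P_n\mathbf{d}=\mathbf{d}_e$. Then
$$\mathcal{I}_c^{GC}(\phi)=\|T_{21}\|_{\rm Tr}+\|T_{12}\|_{\rm Tr}\|T_{22}\|_{\rm Tr}+\|N_{12}\|_{\rm Tr}+\|\mathbf{d}_e\|_1 .$$

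\textbf{Part (1).} Every term is nonnegative, so $\mathcal{I}_c^{GC}(\phi)=0$ if and only if each term vanishes:
\begin{itemize}
\item $T_{21}=0$;
\item $N_{12}=0$;
\item $\mathbf{d}_e=0$;
\item $T_{12}=0$ or $T_{22}=0$.
\end{itemize}
These are exactly Eq.\eqref{eq2} together with either Eq.\eqref{eq4} (when $T_{12}=0$) or Eq.\eqref{eq3} (when $T_{22}=0$). Hence the condition is equivalent to $\phi\in\mathcal{RGC}_n$.

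\textbf{Part (2).} Take $\Phi(A,O,Y,\bar{\mathbf d})\in\mathcal{FO}$. By Theorem \ref{thm1}, either Eqs.\eqref{eq7}-\eqref{eq8} hold or Eqs.\eqref{eq7} and \eqref{eq9} hold.

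In the first case, Theorem \ref{re} shows $\Phi(\phi)$ is real. Part (1) then gives $\mathcal{I}_c^{GC}(\Phi(\phi))=0\le\mathcal{I}_c^{GC}(\phi)$.

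In the second case $\Phi\in\mathcal{FO}_1$, so it suffices to treat $\mathcal{FO}_1$. Here Eqs.\eqref{eq14}-\eqref{eq16} give the blocks of $\Phi(\phi)$:
$$T'_{21}=A_{22}T_{21}O_{11}^{\rm T},\quad T'_{12}=A_{11}T_{12}O_{22}^{\rm T},\quad T'_{22}=A_{22}T_{22}O_{22}^{\rm T},$$
$$N'_{12}=A_{11}N_{12}A_{22}^{\rm T},\quad \mathbf{d}'_e=A_{22}\mathbf{d}_e .$$

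\textbf{Norm estimates.} I will use the ideal inequality $\|XZW\|_{\rm Tr}\le\|X\|\,\|Z\|_{\rm Tr}\,\|W\|$, where $\|\cdot\|$ is the operator norm. Three facts supply the needed bounds:
\begin{itemize}
\item $O$ is orthogonal, and $P_nOP_n^{\rm T}$ is block diagonal, so $O_{11}$ and $O_{22}$ are orthogonal and have operator norm $1$.
\item $P_nAP_n^{\rm T}=A_{11}\oplus A_{22}$ and $\|A\|=1$, so $\|A_{11}\|,\|A_{22}\|\le 1$.
\item Each trace-norm term is therefore nonincreasing: $\|T'_{21}\|_{\rm Tr}\le\|T_{21}\|_{\rm Tr}$, $\|T'_{12}\|_{\rm Tr}\|T'_{22}\|_{\rm Tr}\le\|T_{12}\|_{\rm Tr}\|T_{22}\|_{\rm Tr}$, and $\|N'_{12}\|_{\rm Tr}\le\|N_{12}\|_{\rm Tr}$.
\end{itemize}

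\textbf{Main obstacle: the displacement term.} The $l_1$ norm is not controlled by the operator norm: one only has $\|A_{22}\mathbf{d}_e\|_1\le\|A_{22}\|_{1\to1}\|\mathbf{d}_e\|_1$, where $\|A_{22}\|_{1\to1}$ is the maximal column $l_1$ sum and can exceed $\|A_{22}\|\le 1$. My plan is to check whether the hypothesis $\|A\|=1$ is meant as the norm induced by $l_1$ (or to interpret it that way). If not, I will try to use the constraint Eq.\eqref{eq6}, $Y+i\Delta_n-iA\Delta_nA^{\rm T}\ge0$, together with the block structure, to restrict $A_{22}$. Failing that, I will look for a counterexample with a mixing $A_{22}$ to see whether the statement needs the $l_1$-induced norm.

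With the displacement term handled, summing the four termwise inequalities gives $\mathcal{I}_c^{GC}(\Phi(\phi))\le\mathcal{I}_c^{GC}(\phi)$.
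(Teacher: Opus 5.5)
Your Part (1), the case split through Theorems~\ref{thm1} and~\ref{re}, the block formulas, and the three trace-norm estimates are correct and match the paper's argument. The gap is the one you flagged. Nothing in your proposal proves $\|A_{22}\mathbf{d}_e\|_1\le\|\mathbf{d}_e\|_1$, and it cannot be proved: for $n\ge 2$, statement (2) is false when $\|A\|$ is the operator norm. That is the norm the paper intends, as its claim $\|O\|=1$ for orthogonal $O$ shows. The paper's proof contains exactly the step you distrusted: it bounds this $l_1$ term by citing $\|A_{22}\|\le 1$, and it also writes $\bar{\mathbf{d}}$ where $\mathbf{d}$ is meant.

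Your third plan succeeds. Take $n=2$, $U=\frac{1}{\sqrt2}\begin{pmatrix}1&1\\-1&1\end{pmatrix}$, $A=P_2^{\rm T}(U\oplus U)P_2$ (a real beam splitter), $O=I_4$, $Y=0$, $\bar{\mathbf{d}}=0$. This $A$ is orthogonal and symplectic, so Eq.~\eqref{eq6} holds with $Y=0$, $\|A\|=1$, and Eqs.~\eqref{eq7} and \eqref{eq9} hold; thus $\Phi\in\mathcal{FO}_1\subset\mathcal{FO}$. Now take the displacement channel $\phi$ with $T=I_4$, $N=0$, $\mathbf{d}=(0,1,0,0)^{\rm T}$. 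Then $T_{21}=T_{12}=0$, $N_{12}=0$ and $\mathbf{d}_e=(1,0)^{\rm T}$, so $\mathcal{I}_c^{GC}(\phi)=1$. But $\Phi(\phi)$ has $T'=A$, $N'=0$ and $\mathbf{d}'_e=U(1,0)^{\rm T}=\frac{1}{\sqrt2}(1,-1)^{\rm T}$, so $\mathcal{I}_c^{GC}(\Phi(\phi))=\sqrt2>1$.

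Your other two plans do not help. Eq.~\eqref{eq6} cannot exclude this example: every $A$ with $P_nAP_n^{\rm T}=X\oplus X$, $X$ real orthogonal, is orthogonal symplectic and satisfies Eq.~\eqref{eq6} with $Y=0$. So $A_{22}$ may be any real orthogonal matrix. Reading $\|A\|$ as the $l_1$-induced (maximal column sum) norm repairs the displacement term but breaks the trace-norm estimates, because that norm does not dominate the operator norm. For example, $A_{22}=\begin{pmatrix}1&1\\0&0\end{pmatrix}$ has column-sum norm $1$, yet it maps $T_{21}=\begin{pmatrix}1&0\\1&0\end{pmatrix}$ (trace norm $\sqrt2$) to a matrix of trace norm $2$.

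A correct version needs a genuine change to the statement. One option is to replace $\|Q_n'P_n\mathbf{d}\|_1$ by the Euclidean norm $\|Q_n'P_n\mathbf{d}\|_2$ in the definition. Then $\|A_{22}\mathbf{d}_e\|_2\le\|A_{22}\|\,\|\mathbf{d}_e\|_2$, your argument goes through unchanged, and faithfulness and continuity are unaffected. The other option is to keep the $l_1$ norm and add $\|A_{22}\|_{1\to1}\le 1$ to the definition of $\mathcal{FO}$, alongside $\|A\|=1$. The identity superchannel still qualifies, so Eq.~\eqref{eqfo} survives. For $n=1$, $A_{22}$ is a scalar of modulus at most $1$, so the theorem holds as stated and the one-mode QBM application is not affected.
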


{\it Proof.} (1) It is easily checked that $\mathcal{I}_c^{GC}$ satisfies the condition (GC1).

 (2) By Theorem \ref{re}, it suffices to check (GC$2'$) for $\mathcal{FO}_1$. Take any $\phi(T,N,\mathbf{d})\in\mathcal {GC}_n$ and $\Phi\in\mathcal{FO}_1$. By a similar calculation to those of  Eqs.\eqref{eq14}-\eqref{eq16},  $\Phi(\phi)(T',N',\mathbf{d}')$ satisfies
\begin{equation*}
	\begin{array}{rl}P_nT'P_n^{\rm T}=&\left(\begin{array}{cc}
	A_{11}T_{11}O_{11}^{\rm T}&
	A_{11}T_{12}O_{22}^{\rm T}\\
	A_{22}T_{21}O_{11}^{\rm T}&
	A_{22}T_{22}O_{22}^{\rm T}\\
   \end{array}\right),\end{array}
\end{equation*}

\begin{equation*}
\begin{array}{rl} P_nN'P_n^{\rm T}= &\left(\begin{array}{cc}
	A_{11}N_{11}A_{11}^{\rm T}+Y_{11}     & A_{11}N_{12}A_{22}^{\rm T} \\
	A_{22}N_{12}^{\rm T}A_{11}^{\rm T} & A_{22}N_{22}A_{22}^{\rm T}+Y_{22}\\
	\end{array}\right)
\end{array}\end{equation*}
and
\begin{equation*}
Q_n'P_n\mathbf{d}'=  A_{22}Q_n'P_n\mathbf{d}.
\end{equation*}
\if false From Eq.\eqref{eqd5}, Eq.\eqref{N} and the above three equations, we obtain
$$\begin{array}{rl}0=&\sup\limits_{\Phi\in \mathcal{FO}}{\mathcal I}_c^{GC}(\Phi(\phi))\ge\sup\limits_{\Phi\in \mathcal{FO}_1}{\mathcal I}_c^{GC}(\Phi(\phi))\\
= &\sup\limits_{\Phi\in \mathcal{FO}_1}\big(\|Q'_nP_nT'P_n^{\rm T}Q_n^{\rm T}\|_{\rm Tr}\\&+\|Q_nP_nT'P_n^{\rm T}Q_n^{'\rm T}\|_{\rm Tr}\|Q'_nP_nT'P_n^{\rm T}Q_n^{'\rm T}\|_{\rm Tr}\\
& +\|Q_nP_nN'P_n^{\rm T}Q_n'^{\rm T}\|_{\rm Tr}+\|Q_n'P_n \mathbf{d}'\|_1\big)\\
= & \sup\limits_{\Phi\in \mathcal{FO}_1}\big(\|A_{22}T_{21}O_{11}^{\rm T}\|_{\rm Tr}\\&+\|A_{11}T_{12}O_{22}^{\rm T}\|_{\rm Tr}\|A_{22}T_{22}O_{22}^{\rm T}\|_{\rm Tr}\\
  & + \|A_{11}N_{12}A_{22}^{\rm T}\|_{\rm Tr}+\|A_{22}Q_n'P_n \mathbf{d}\|_1\big)\\
\ge & \mathcal{I}_c^{GC}(\Phi(I,I,Y,\mathbf{d})(\phi))\\
= &\|T_{21}\|_{\rm Tr}+\|T_{12}\|_{\rm Tr}\|T_{22}\|_{\rm Tr}+\|N_{12}\|_{\rm Tr}+\|Q_n'P_n\bar{\mathbf{d}}\|_1\\
=& \mathcal{I}_c^{GC}(\phi).
\end{array}$$
This implies that $\phi$ is a real Gaussian channel. So Eq.\eqref{eqfo} holds.\fi
Now, \if false take any Gaussian channel $\phi=\phi(T,N,\mathbf{d})\in\mathcal {GC}_n$, and any  $\Phi=\Phi(A, O, Y, \bar{\mathbf{d}})\in\mathcal{FO}$.\fi similar to Eq.\eqref{M}, as  $\Phi$ satisfy Eq.\eqref{eq7} and Eq.\eqref{eq9}, we can derive that
$$\begin{array}{rl}&\mathcal{I}_c^{GC}(\Phi(\phi))\\
= & \|A_{22}T_{21}O_{11}^{\rm T}\|_{\rm Tr}+\|A_{11}T_{12}O_{22}^{\rm T}\|_{\rm Tr}\|A_{22}T_{22}O_{22}^{\rm T}\|_{\rm Tr}\\
  & + \|A_{11}N_{12}A_{22}^{\rm T}\|_{\rm Tr}+\|A_{22}Q_n'P_n \mathbf{d}\|_1\\
\le & \|T_{21}\|_{\rm Tr}+\|T_{12}\|_{\rm Tr}\|T_{22}\|_{\rm Tr}+\|N_{12}\|_{\rm Tr}+\|Q_n'P_n\bar{\mathbf{d}}\|_1\\
=& \mathcal{I}_c^{GC}(\phi),
\end{array}$$
where the last inequality is true bacause of $\max\{\|A_{11}\|,\|A_{22}\|\}\le\|P_nAP_n^{\rm T}\|\le\|A\|=1$ and $\max\{\|O_{11}\|,\|O_{22}\|\}\le \|P_nOP_n^{\rm T}\|\le\|O\|=1$.\hfill$\Box$

This result indicates that $\mathcal{I}_c^{GC}$ is a reasonable Gaussian imaginary measure for Gaussian channels, and with real Gaussian channels are free elements and the superchannels in $\mathcal {FO}$ or $\mathcal {FO}_1$ as the  free operations, the  imaginarity of Gaussian channels is also a Quantum resource.
Moreover, different from  $\mathcal{I}_d^{GC}$ in Theorem \ref{thm5}, $\mathcal{I}_c^{GC}$ is  continuous.

\begin{example}\label{ex13}
{\rm Consider the $n$-mode Gaussian amplifying channel $\phi=\phi(T,N,\mathbf{d})$ described by $T=\sqrt{\tau}I_{2n}$, $N=(\tau-1)(2n_{\rm th}+1)I_{2n}$ with $\tau\ge 1$, $\mathbf{d}=(d_1,d_2,\ldots,d_{2n})^{\rm T}\in{\mathbb R}^{2n}$.}
\end{example}

Zhang et.al in \cite{ZHQ} introduced an easily computable Gaussian imaginarity measure $\mathcal{I}^{G_n}$ for any $n$-mode Gaussian state $\rho=\rho(\bar{\mathbf{d}}_0,\nu)\in\mathcal{S}(H)$:
\begin{equation}\label{eq4.7}
\begin{array}{rl}&\mathcal{I}^{G_n}(\rho)\\
	=&1-\dfrac{\det(\nu)}{\det(Q_nP_n\nu P_n^{\rm T}Q_n^{\rm T})\det(Q'_nP_n\nu P_n^{\rm T}Q_n'^{\rm T})}\\
	&+h(\|Q_n'P_n\bar{\mathbf{d}}_0\|_1).
\end{array}\end{equation}
 So, an imaginarity measure of Gaussian channels based on Eq.\eqref{eq4.7} is given by
\begin{equation*}\label{eq4.8}
{\mathcal I}_s^{GC}(\phi) = \sup_{\rho\in \mathcal{R}\mathcal{G}\mathcal{S}_n} {\mathcal I}^{G_n}\bigl(\phi(\rho)\bigr).
\end{equation*}

Write
$\bar{\mathbf{d}}_0=(d_{01},d_{02},\ldots,d_{02n})^{\rm T}$ and $\nu=(\nu_{kl})_{2n\times 2n}$.
If $\rho$ is real, then $d_{02k}=0$ for $k=1,2,\ldots,n$ and $\nu_{2k-1,2l}=0$ for $k,l\in\{1,2,\ldots,n\}$. Then $\phi(\rho)=\rho'(\bar{\mathbf{d}}_0',\nu')$ satisfies $\bar{\mathbf{d}}_0'=\sqrt{\tau}\bar{\mathbf{d}}_0+\mathbf{d}$ and $\nu'=\tau\nu+(\tau-1)(2n_{\rm th}+1)I_{2n}$. Thus, from Eq.\eqref{eq4.7}, we have
$$\mathcal{I}^{G_n}(\phi(\rho))=
\begin{cases}
0 & \text{if all } d_{2k}=0,\\
1 & \text{otherwise},
\end{cases}$$
and so
$$\mathcal{I}_s^{GC}(\phi)=
\begin{cases}
0 & \text{if all } d_{2k}=0,\\
1 & \text{otherwise}.
\end{cases}$$
This  means that
$\phi$ carries imaginarity whenever at least one $d_{2k}$ is nonzero.

In addition, using Eq.\eqref{eqdf8} and Eq.\eqref{eqd5}, one can  obtain
$$\mathcal{I}_d^{GC}(\phi)=
\begin{cases}
0 & \text{if all } d_{2k}=0,\\
1 & \text{otherwise},
\end{cases}$$
and
$$\mathcal{I}_c^{GC}(\phi)=|d_{2}|+|d_{4}|+\cdots+|d_{2n}|.$$

From Example \ref{ex13}, we observe that  $\mathcal{I}_s^{GC}$ can be easily computed  whenever the imaginarity measure of a Gaussian state is straightforward to evaluate, which represents a clear advantage.  When $|d_{2}|,|d_{4}|,\cdots, |d_{2n}|$ are sufficiently small, $\mathcal{I}_c^{GC}$ becomes nearly  zero, making it difficult to detect the imaginarity of Gaussian amplifying channels using this measure. By contrast,  the imaginarity of Gaussian amplifying channels can be readily identified via  $\mathcal{I}_s^{GC}$ and $\mathcal{I}_d^{GC}$. This indicates that $\mathcal{I}_s^{GC}$ and $\mathcal{I}_d^{GC}$ are more advantageous than $\mathcal{I}_c^{GC}$ in this case.
However, if one of $|d_{2}|,|d_{4}|,\cdots, |d_{2n}|$ is large and these quantities vary continuously, both  $\mathcal{I}_s^{GC}$ and $\mathcal{I}_d^{GC}$
become identically equal to 1. In comparison, the merit of $\mathcal{I}_c^{GC}$
is that it captures how the imaginarity of the Gaussian channel evolves.

\section{Dynamic evolution of imaginarity for Gaussian channels}

As the imaginarity measure ${\mathcal I}_c^{GC}$    is readily computable and continuous, it provides a convenient tool for investigating the imaginarity evolution of Gaussian channels throughout the system's evolutionary processes.

To illustrate a typical application, consider  one-mode Quantum Brownian Motion (QBM) Gaussian channel $\phi=\phi(T,N,0)$, which describes the evolution of a quantum harmonic oscillator (with frequency $\omega_0$) interaction with a bath of harmonic oscillators via a position-position coupling. The corresponding exact master equation is given by \cite{IMM}:
\begin{equation}\label{eqdy1}
\begin{array}{rl}\dot{\rho}(t) =& -i\big[H_0(t),\rho(t)\big] - i\gamma(t)\big[\hat{q},\{\hat{p},\rho(t)\}\big]\\
&-\Delta(t)\big[\hat{q},\big[\hat{q},\rho(t)\big]\big] + \Pi(t)\big[\hat{q},\big[\hat{p},\rho(t)\big]\big],
\end{array}\end{equation}
where  $H_0(t)$ is the free Hamiltonian of the system, $\hat{q}$ and $\hat{p}$ are position and momentum operators, $\gamma(t)$ is the damping coefficient,
$\Delta(t)$ and $\Pi(t)$ are the direct and anomalous diffusion coefficients, respectively. These coefficients are given by:
\begin{equation}\label{eqdy2}
\gamma(t) = \alpha^2 \int_0^t  {\rm d}s \int_0^{+\infty} {\rm d}\omega\, J(\omega) \sin(\omega s) \sin(\omega_0 s),
\end{equation}
 \begin{equation}\label{eqdy3}
	\begin{array}{rl}&\Delta(t) \\=& \alpha^2 \int_0^t  {\rm d}s \int_0^{+\infty}  {\rm d}\omega\, J(\omega)(2P(\omega)+1) \cos(\omega s) \cos(\omega_0 s),\end{array}\end{equation}
\begin{equation}\label{eqdy4}
\begin{array}{rl}&	\Pi(t) \\=& \alpha^2 \int_0^t  {\rm d}s \int_0^{+\infty}  {\rm d}\omega\, J(\omega)\bigl[2P(\omega)+1\bigr] \cos(\omega s) \sin(\omega_0 s),\end{array}\end{equation}
where $\alpha$ is the oscillator-bath coupling constant,
$P(\omega)=\bigl[\exp(\hbar\omega/k_B \mathcal{T})-1\bigr]^{-1}$ is the mean number of photons,
$J(\omega)$ is the spectral density that models the system-environment interaction.

The $2 \times 2$ real matrices $T,N$ that characterize the
evolution Eq.\eqref{eqdy1} are given by \cite{TI}:
\begin{align}
T(t) &= e^{-\frac{\Gamma(t)}{2}}R(t), \label{eqdy5} \\
N(t) &= 2\bar{W}(t), \label{eqdy6}
\end{align}
where $\Gamma(t)=2\int_0^t\gamma(t')\,{\rm d}t'$,
\begin{equation}\label{eqdy7}
R(t) = \begin{pmatrix}
\cos(\omega_0 t) & \sin(\omega_0 t) \\
-\sin(\omega_0 t) & \cos(\omega_0 t)
\end{pmatrix},
\end{equation}
\begin{equation}\label{eqdy8}
\begin{array}{rl}
	&\bar{W}(t) \\= & \bigl[R^{-1}(t)\bigr]^\mathrm{T}
\left[ e^{-\Gamma(t)} \int_0^t ds\, e^{\Gamma(s)} R^\mathrm{T}(s) M(s) R(s) \right]R^{-1}(t)\end{array}
\end{equation}
and
\begin{equation}\label{eqdy9}
M(s) = \begin{pmatrix}
\Delta(s) & -\Pi(s)/2 \\
-\Pi(s)/2 & 0
\end{pmatrix}.
\end{equation}

In what follows, we consider the case of weak oscillator-bath coupling ($\alpha\ll 1$) and the spectral density distributions of the Ohmic reservoir given by $J(\omega)=\frac{\omega}{\omega_c} e^{-(\omega/\omega_c)}$, where $\omega_c$ denotes the cutoff frequency of the environment. For the high- and low- temperature regimes, corresponding to $2P(\omega)+1=\coth(\frac{\hbar \omega}{2k_B\mathcal{T}})\approx \frac{2k_B\mathcal{T}}{\hbar \omega}$ and $2P(\omega)+1\approx 1+\exp(-\frac{\hbar \omega}{2k_B\mathcal{T}})$, respectively, closed-form expressions for Eqs.\eqref{eqdy2}-\eqref{eqdy4} in terms of the dimensionless time $\tau=\omega_c t$ and non-Markovianity parameter $x=\frac{\omega_c}{\omega_0}$ are given by \cite{PVP}:
\begin{widetext}
	\begin{equation*}\label{eqdy10}
\begin{array}{rl}\gamma(\tau) = \frac{\alpha^2}{4x}
&\big(i e^{-\frac{1}{x}} \left[ \Ei\left( \frac{1-i\tau}{x} \right) - \Ei\left( \frac{1+i\tau}{x} \right) \right] + e^{\frac{1}{x}} \left[ 2\pi + i \Ei\left( \frac{i\tau - 1}{x} \right) - i \Ei\left( \frac{1+i\tau}{x} \right) \right]
- \frac{4x \sin\left(\frac{\tau}{x}\right)}{1+\tau^2}\big),
\end{array}
\end{equation*}
	\begin{equation*}\label{eqdy11}
	\begin{array}{rl}
\Delta_{\mathcal{T}_{\mathrm{high}}}(t)  = \frac{\alpha^{2}k_{B}\mathcal{T} e^{-1/x}}{2\hbar\omega_{c}}
& \big(i\left[\Ei\left(\frac{1 - i\tau}{x}\right) - \Ei\left(\frac{1 + i\tau}{x}\right)\right]
+ e^{2/x}\left[2\pi + i\Ei\left(\frac{i\tau - 1}{x}\right) - i\Ei\left(-\frac{1 + i\tau}{x}\right)\right]\big),
\end{array}
\end{equation*}
\begin{equation*}\label{eqdy12}
\begin{array}{rl}\Pi_{\mathcal{T}_{\text{High}}}(t)
= \frac{\alpha^2 k_B \mathcal{T}e^{-1/x}}{2\hbar\omega_c}
&\big(-\Ei\left(\frac{1-i\tau}{x}\right) - \Ei\left(\frac{1+i\tau}{x}\right) + 2\Ei\left(\frac{1}{x}\right) \\
& + e^{2/x} \left[ -2\Ei\left(-\frac{1}{x}\right) + \Ei\left(\frac{i\tau-1}{x}\right) + \Ei\left(-\frac{i\tau+1}{x}\right) \right]\big)
\end{array}
\end{equation*}
and
\begin{equation*}\label{eqdy13}
\begin{array}{rl}
\Delta_{\mathcal{T}_{\text{Low}}}(t)
&= \frac{\alpha^2}{4x}
\Bigg\{
\frac{4x \cos(\tau/x)}{1+\tau^2}
+ ie^{-1/x} \left[ \Ei\left(\frac{1-i\tau}{x}\right) - \Ei\left(\frac{1+i\tau}{x}\right) \right]  \\
&- e^{1/x} \left[ 2\pi + i\Ei\left(\frac{-1+i\tau}{x}\right) - i\Ei\left(-\frac{1+i\tau}{x}\right) \right]
\Bigg\} + \frac{2\alpha^2 \tau \cos(\tau/x)}{\tau^2 + [1 + (\hbar\omega_c)/(k_B \mathcal{T})]^2}\\
&+ \frac{\alpha^2}{x}\Bigg\{
i\Ci\left(\frac{\tau-i[1+(\hbar\omega_c)/(k_B \mathcal{T})]}{x}\right)
-i\Ci\left(\frac{\tau+i[1+(\hbar\omega_c)/(k_B \mathcal{T})]}{x}\right) - \pi
\Bigg\}  \sinh\left[\frac{1+(\hbar\omega_c)/(k_B \mathcal{T})}{x}\right] \\
&+ \cosh\left[\frac{1+(\hbar\omega_c)/(k_B \mathcal{T})}{x}\right] \left[ \Si\left(\frac{\tau-i[1+(\hbar\omega_c)/(k_B \mathcal{T})]}{x}\right)
+ \Si\left(\frac{\tau+i[1+(\hbar\omega_c)/(k_B \mathcal{T})]}{x}\right)\right],
\end{array}
\end{equation*}
\begin{equation*}
\begin{array}{rl}
	\Pi_{T_{\mathrm{low}}}(t)
= &\frac{\alpha^{2}}{4x}
\Bigg\{
\frac{4x \sin(\tau/x)}{1 + \tau^{2}}
- e^{-1/x} \left[ \mathrm{Ei}\left(\frac{1 - i\tau}{x}\right) + \mathrm{Ei}\left(\frac{1 + i\tau}{x}\right) - 2\mathrm{Ei}\left(\frac{1}{x}\right) \right] \\
& + e^{1/x} \left[ 2\mathrm{Ei}\left(-\frac{1}{x}\right) - \mathrm{Ei}\left(\frac{-1 + i\tau}{x}\right) - \mathrm{Ei}\left(-\frac{1 + i\tau}{x}\right) \right]\Bigg\} + \frac{2\alpha^{2}\tau \sin(\tau/x)}{\tau^{2} + [1 + (\hbar\omega_{c})/(k_{B}\mathcal{T})]^{2}}
\\&+ \frac{\alpha^{2}}{x}
\Bigg\{
\cosh\left[ \frac{1 + (\hbar\omega_{c})/(k_{B}\mathcal{T})}{x} \right]
\Bigg[ \mathrm{Ci}\left(-i\frac{1 + (\hbar\omega_{c})/(k_{B}\mathcal{T})}{x}\right)+ \mathrm{Ci}\left( i\frac{1 + (\hbar\omega_{c})/(k_{B}\mathcal{T})}{x}\right) \\
&- \mathrm{Ci}\left( \frac{\tau - i[1 + (\hbar\omega_{c})/(k_{B}\mathcal{T})]}{x}\right)- \mathrm{Ci}\left( \frac{\tau + i[1 + (\hbar\omega_{c})/(k_{B}\mathcal{T})]}{x}\right) \Bigg] \Bigg\}\\
&+ \sinh\left[ \frac{1 + (\hbar\omega_{c})/(k_{B}\mathcal{T})}{x} \right]
\Bigg[ -2\mathrm{Shi}\left( \frac{1 + (\hbar\omega_{c})/(k_{B}\mathcal{T})}{x} \right) \\&+ i\mathrm{Si}\left( \frac{\tau - i[1 + (\hbar\omega_{c})/(k_{B}\mathcal{T})]}{x} \right)
   - i\mathrm{Si}\left( \frac{\tau + i[1 + (\hbar\omega_{c})/(k_{B}\mathcal{T})]}{x} \right) \Bigg],
\end{array}
\end{equation*}
\end{widetext}
where $\Ei(z) = -\int_{-z}^{+\infty} \frac{e^{-u}}{u}\,{\rm d}u$, $\Ci(z) = -\int_{z}^{+\infty} \frac{\cos u}{u}\,{\rm d}u$, $\Si(z) = \int_{0}^{z} \frac{\sin u}{u}\,{\rm d}u$ and $\Shi(z) = \int_{0}^{z} \frac{\sinh u}{u}\,{\rm d}u.$
By Eqs.\eqref{eqdy5}-\eqref{eqdy9}, the imaginarity measure $\mathcal{I}_c^{GC}(\phi)$ can be expressed in terms of the dimensionless time $\tau=\omega_c t$ and non-Markovianity parameter $x=\frac{\omega_c}{\omega_0}$ by
\begin{equation*}\label{eqdy00}
\begin{array}{rl}
	\mathcal{I}^{GC}(\phi)&=\mathcal{I}_c^{GC}(\phi)\\	&=|e^{-\frac{\Gamma(\tau)}{2}}\sin(\tau/x)|\\&+\frac{1}{2}|e^{-\Gamma(\tau)}\sin(2\tau/x)|+|N(\tau,x)_{12}|\end{array}
\end{equation*}
with
\begin{equation*}
\begin{array}{rl}
	N(\tau,x)_{12}
	=& e^{-\Gamma(\tau)} \int_0^\tau \mathrm{d}\tau' e^{\Gamma(\tau')}
	\Bigg[ \Delta(\tau')\sin\left(\frac{2(\tau'-\tau)}{x}\right) \\
	&- \Pi(\tau') \cos\left(\frac{2(\tau'-\tau)}{x}\right) \Bigg].
	\end{array}
\end{equation*}
\begin{figure*}[tbp]
 \centering
  \begin{subfigure}{0.48\textwidth}
    \centering
    \includegraphics[width=\linewidth,height=6cm]{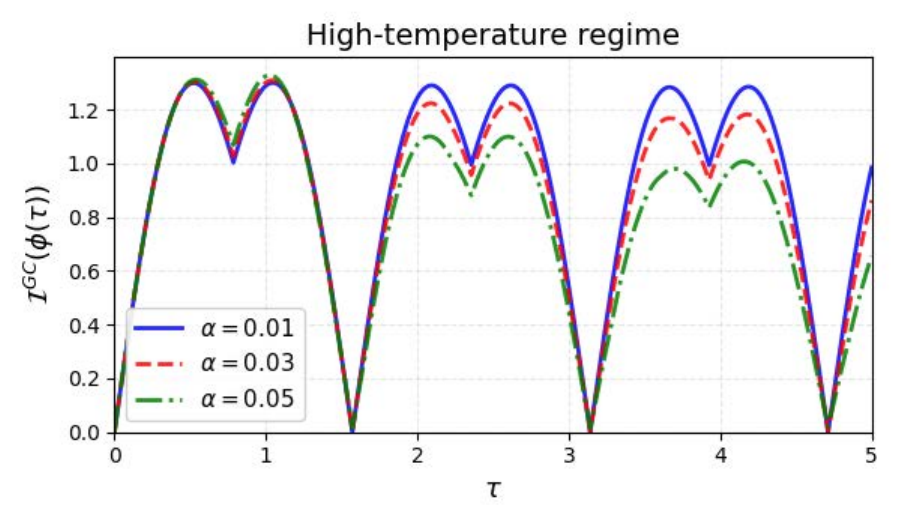}
    \caption{}
    \label{fig:1a}
  \end{subfigure}
  \hfill
  \begin{subfigure}{0.48\textwidth}
    \centering
    \includegraphics[width=\linewidth,height=6cm]{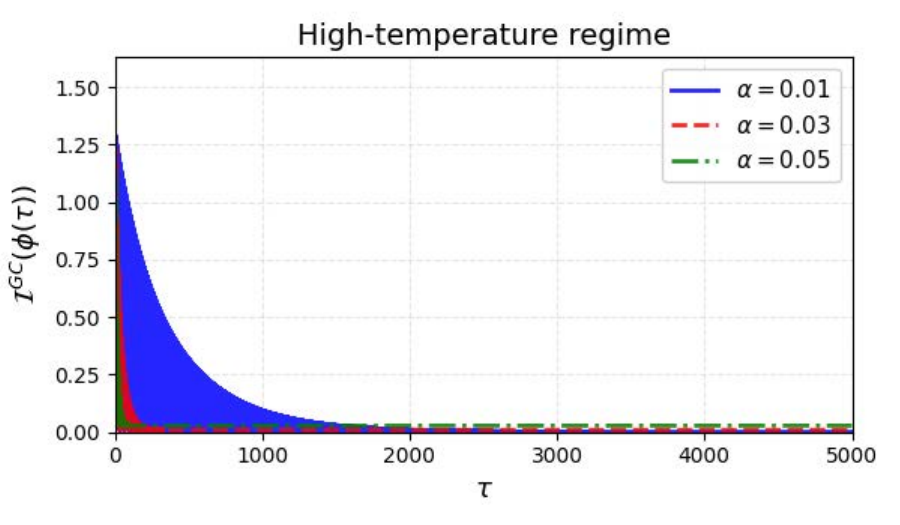}
    \caption{}
    \label{fig:1b}
  \end{subfigure}
    \begin{subfigure}{0.48\textwidth}
    \centering
    \includegraphics[width=\linewidth,height=6cm]{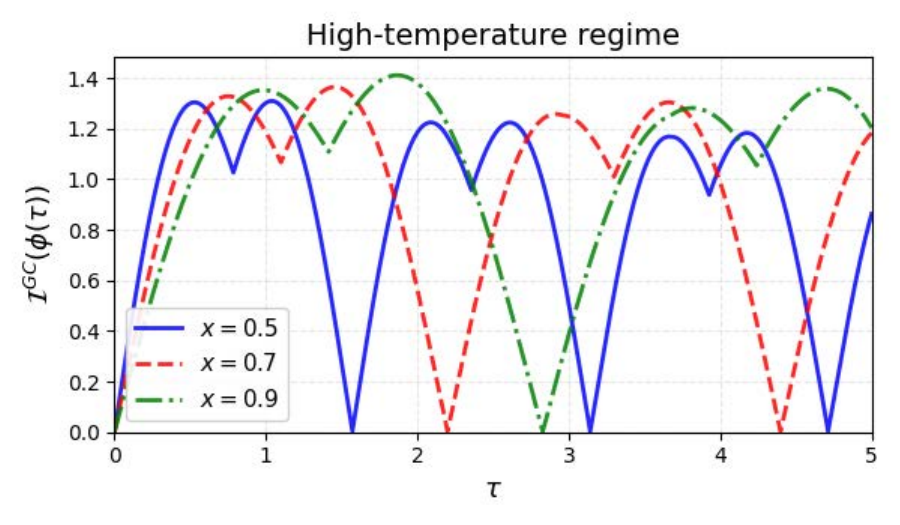}
    \caption{}
    \label{fig:1c}
  \end{subfigure}
  \hfill
  \begin{subfigure}{0.48\textwidth}
    \centering
    \includegraphics[width=\linewidth,height=6cm]{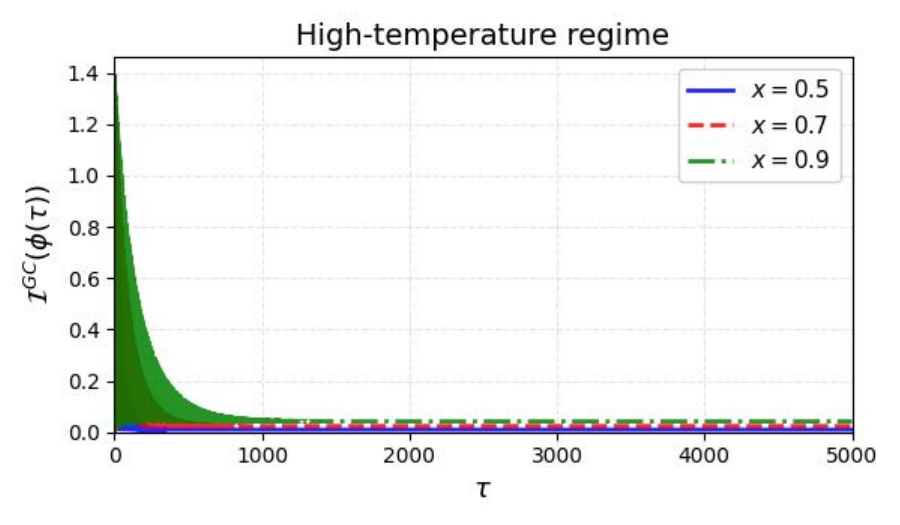}
    \caption{}
    \label{fig:1d}
  \end{subfigure}
  \caption{\small Behavior of $\mathcal I^{GC}(\phi(\tau))$ as a function of the dimensionless time $\tau=\omega_c t$ in the high-temperature Ohmic reservoir with $k_BT/\hbar\omega_c=100$.
  (a) $\mathcal I^{GC}(\phi(\tau))$ versus $\tau$ for fixed $x=\frac{\omega_c}{\omega_0}=0.5$ and different $\alpha=0.01, 0.03, 0.05$, respectively. (b) $\mathcal I^{GC}(\phi(\tau))$ in the limit of sufficiently large $\tau$ for fixed $x=\frac{\omega_c}{\omega_0}=0.5$ and different $\alpha=0.01, 0.03, 0.05$, respectively. (c) $\mathcal I^{GC}(\phi(\tau))$ versus $\tau$ for fixed $\alpha=0.03$ and different $x=\frac{\omega_c}{\omega_0}=0.5, 0.7, 0.9$, respectively. (d) $\mathcal I^{GC}(\phi(\tau))$ in the limit of sufficiently large $\tau$ for fixed $\alpha=0.03$ and different $x=\frac{\omega_c}{\omega_0}=0.5, 0.7, 0.9$, respectively.}
  \label{1}
\end{figure*}
\begin{figure*}[tbp]
   \centering
   \begin{subfigure}{0.48\textwidth}
    \centering
   \includegraphics[width=\linewidth,height=6cm]{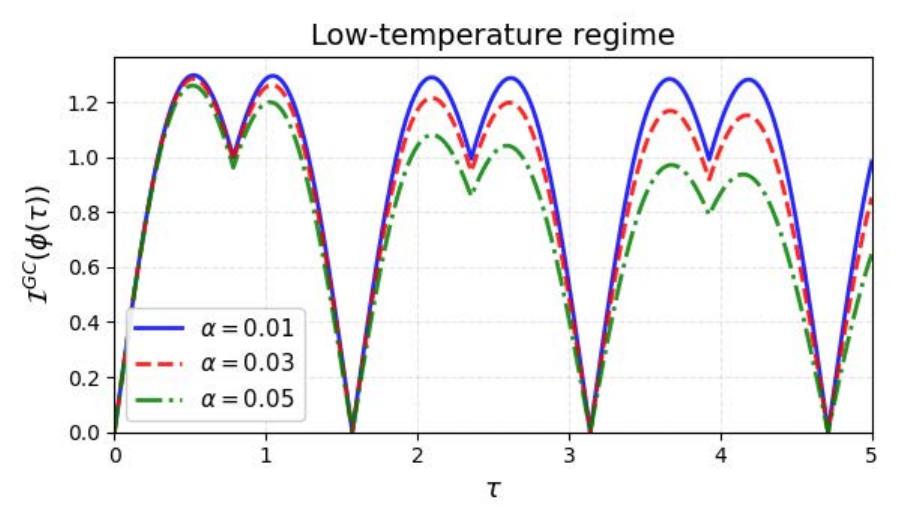}
   \caption{}
   \label{fig:2a}
  \end{subfigure}
  \begin{subfigure}{0.48\textwidth}
   \centering
   \includegraphics[width=\linewidth,height=6cm]{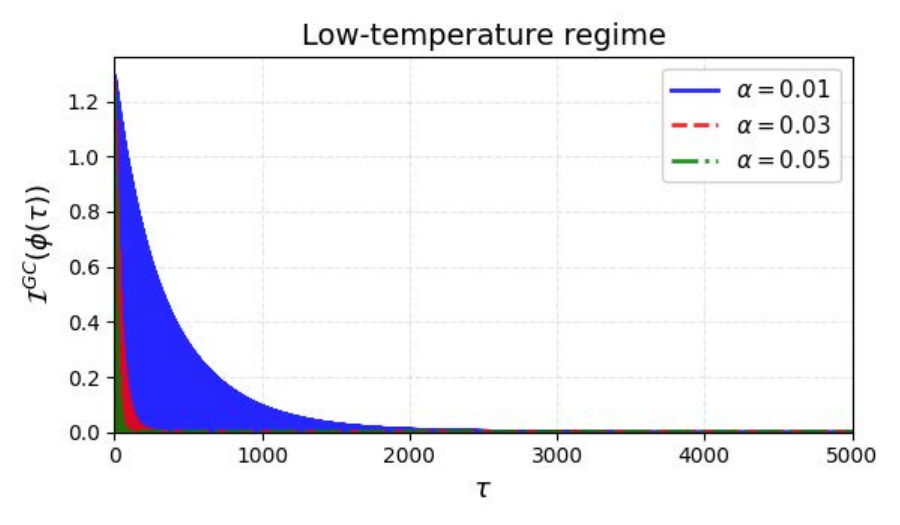}
   \caption{}
    \label{fig:2b}
  \end{subfigure}
    \begin{subfigure}{0.48\textwidth}
    \centering
    \includegraphics[width=\linewidth,height=6cm]{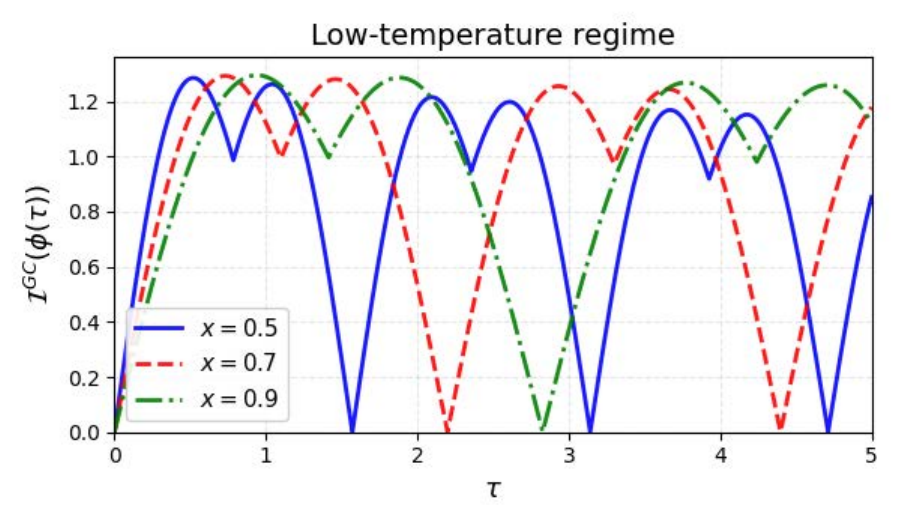}
    \caption{}
    \label{fig:2c}
  \end{subfigure}
  \hfill
  \begin{subfigure}{0.48\textwidth}
    \centering
    \includegraphics[width=\linewidth,height=6cm]{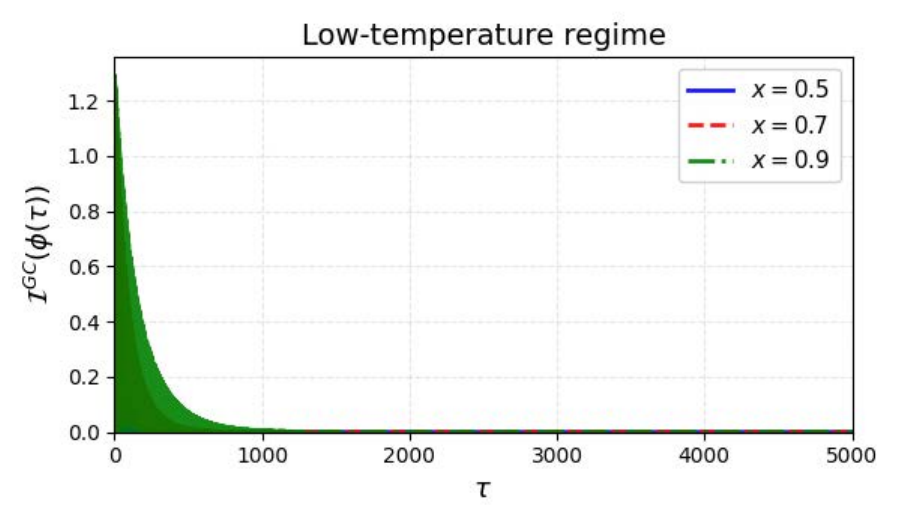}
    \caption{}
    \label{fig:2d}
  \end{subfigure}
  \caption{\small Behavior of $\mathcal I^{GC}(\phi(\tau))$ as a function of the dimensionless time $\tau=\omega_c t$ in a low-temperature Ohmic reservoir with $k_BT/\hbar\omega_c=10$.
  (a) Plots of $\mathcal I^{GC}(\phi(\tau))$ as a function of the parameter $\tau$ for fixed  $x=\frac{\omega_c}{\omega_0}=0.5$ and different $\alpha=0.01, 0.03, 0.05$, respectively. (b) Plots of $\mathcal I^{GC}(\phi(\tau))$ in the limit of sufficiently large $\tau$ for fixed $x=\frac{\omega_c}{\omega_0}=0.5$ and different $\alpha=0.01, 0.03, 0.05$, respectively. (c) Plots of $\mathcal I^{GC}(\phi(\tau))$ as a function of the parameter $\tau$ for fixed $\alpha=0.03$ and different $x=\frac{\omega_c}{\omega_0}=0.5, 0.7, 0.9$, respectively. (d) Plots of $\mathcal I^{GC}(\phi(\tau))$ in the limit of sufficiently large $\tau$ for fixed $\alpha=0.03$ and different $x=\frac{\omega_c}{\omega_0}=0.5, 0.7, 0.9$, respectively.}
  \label{2}
\end{figure*}
Thus, $\mathcal I^{GC}(\phi)$ is a function of the dimensionless time $\tau=\omega_c t$, coupling constance $\alpha$ and non-Markovian parameter $x=\frac{\omega_c}{\omega_0}$. It  exhibits periodic oscillatory evolution with oscillation period of $\pi x$. \if false  derived from Eq.\eqref{eqdy00}.\fi
This enable us to  analyze the dynamic behavior of the imaginarity of Gaussian channels for the system determined by Eq.(\ref{eqdy1}).

In  Fig.\ \ref{1}, we fix the high-temperature parameter $k_BT/\hbar\omega_c=100$.
In Fig.\ref{1}(a),  by fixing $x=0.5$, $\mathcal I^{GC}(\phi(\tau))$ indicates the  periodic oscillatory evolution over time $\tau$ for $\alpha=0.01, 0.03, 0.05$, respectively. We see that, for a fixed coupling constant $\alpha$, the oscillation amplitude of  $\mathcal I^{GC}(\phi(\tau))$ decreases as the oscillation period increases; and moreover, as $\alpha$ increases, the oscillation amplitude of $\mathcal I^{GC}(\phi(\tau))$ decreases significantly.
Fig.\ref{1}(b) shows that,  as the time $\tau$ becomes sufficiently large, $\mathcal I^{GC}(\phi(\tau))$ tends to a steady value: $\mathcal I^{GC}(\phi(\tau))\approx 0.001$ for $\alpha=0.01$, $\mathcal I^{GC}(\phi(\tau))\approx 0.01$ for $\alpha=0.3$ and $\mathcal I^{GC}(\phi(\tau))\approx 0.028$ for $\alpha=0.05$.

Fig.\ref{1}(c) demonstrates that, for a fixed $\alpha=0.03$,  $\mathcal I^{GC}(\phi(\tau))$ always displays   oscillatory behavior as a function of $\tau$ at $x=0.5, 0.7, 0.9$, respectively. As $x$ increases, the oscillation amplitude of $\mathcal I^{GC}(\phi(\tau))$ gradually increases  while the oscillation period lengthens, indicating that the parameter $x$ plays a significant regulatory role in the system dynamics.  Fig.\ref{1}(d) further shows that, in the long-time limit,  $\mathcal I^{GC}(\phi(\tau))$ approaches  a steady value: approximately
 $0.01$ for $x=0.5$, $0.024$ for $x=0.7$ and $0.042$ for $x=0.9$.

In Fig.\ref{2}, we fix the low-temperature parameters $k_BT/\hbar\omega_c=10$.  In Fig.\ref{2}(a) and Fig.\ref{2}(c), we obtain an imaginarity dynamic evolution process similar to those in Fig.\ref{1}(a) and Fig.\ref{1}(c). In contrast, Fig.\ref{2}(b) and Fig.\ref{2}(d) show that $\mathcal I^{GC}(\phi(\tau))$ tends to zero  whenever $\tau$ becomes sufficiently large.

\section{Conclusion and Discussion}

In this work, we consider the question how to quantify the imaginarity of Gaussian channels from the perspective of resource theory. We give the structural characterization of real $n$-mode Gaussian superchannels and then
propose three reasonable measures of imaginarity for Gaussian channels, namely, $\mathcal I_s^{GC}$, $\mathcal I_d^{GC}$ and $\mathcal I_c^{GC}$, all of which satisfy faithfulness and monotonicity under real $n$-mode Gaussian superchannels. $\mathcal I_s^{GC}$ is constructed by any given imaginarity measure of Gaussian states, while, $\mathcal I_d^{GC}$ and $\mathcal I_c^{GC}$ are constructed from the intrinsic parameters of Gaussian channels without introducing additional auxiliary parameters, and are computationally efficient. Furthermore,
$\mathcal I_c^{GC}$ is a continuous measure, which enables us to characterize the smooth and gradual evolution of imaginarity in quantum channels.

As an application of $\mathcal I_c^{GC}$,
we investigate the dynamical evolution of imaginarity for QBM Gaussian channels. By separately considering the high-temperature and low-temperature regimes, we find that the imaginarity of the QBM Gaussian channel exhibits periodic oscillations with decaying amplitude as the dimensionless time
$\tau$  increases. Finally, for sufficiently large
$\tau$, the imaginarity converges to a steady value in the high-temperature regime, whereas it asymptotically approaches zero in the low-temperature case.

Several issues remain open for future exploration. In discrete-variable quantum systems, a framework for quantifying the imaginarity of quantum operations has been established in \cite{WW} according to their capability to generate or detect imaginarity. It is therefore natural to expect that Gaussian channels possess a similar capacity to generate and detect quantum imaginarity. Furthermore, given that the trade-off between entanglement and imaginarity for Gaussian states has been investigated in \cite{ZLL}, the imaginarity of Gaussian channels is expected to characterize quantum correlations in an analogous manner.


\vspace{2mm}
{\bf Acknowledgments}
This work is supported by National Natural
Foundation of China (Nos. 12571138, 12301152, 12171290, 12071336) and Shanxi Scholarship Council of China (No. 2025-001).

\if false \section*{Appendix}

\begin{proof}[A proof of Theorem \ref{thm7}]
	
	Take any any Gaussian channel $\phi=\phi(T,N,\mathbf{d})\in\mathcal {GC}_n$ and any real Gaussian superchannel $\Phi(A, O, Y, \bar{\mathbf{d}})$. If $\Phi$ satisfy Eqs.\eqref{eq7}-\eqref{eq8}, $\Phi(\phi)$ is clearly real. As a result, we have $\mathcal{I}_d^{GC}(\Phi(\phi))=0\le\mathcal{I}_d^{GC}(\phi)$.
	
	Now, assume that $\Phi$ satisfy Eq.\eqref{eq7} and Eq.\eqref{eq9}. In this case, $\Phi$ is determined by the map $\mathbf{d}\mapsto \mathbf{d}'=\mathbf{d}_{\Phi(\phi)}=A\mathbf{d}+\bar{\mathbf{d}}$, $N\mapsto N'=N_{\Phi(\phi)}=ANA^{\rm T}+Y$ and $T\mapsto T'=T_{\Phi(\phi)}=AT\Sigma_n O^{\rm T} \Sigma_n$. Here $\bar{\mathbf{d}}=(\bar{d}_1,0,\bar{d}_3,0,\ldots,\bar{d}_{2n-1},0)^{\rm T}\in\mathbb{R}^{2n}$,
	and $Y=Y^{\rm T}=(y_{ij})\ge0$, $A=(a_{ij})$, $O=(o_{ij})$ are $2n\times 2n$ real matrices which satisfy the following conditions:
	\begin{equation*}
		y_{2k-1,2l}=0,\ \ \bar{d}_{2k}=0 \ \ {\rm for} \ \ k,l\in\{1,2,\ldots,n\}
	\end{equation*}
	and
	\begin{equation*}
		a_{2k-1,2l}=a_{2k,2l-1}, o_{2k-1,2l}=o_{2k,2l-1}=0\ {\rm for} \  k,l\in\{1,\ldots,n\}.
	\end{equation*}
	Consequently,  $P_n\bar{\mathbf{d}}=(\mathbf{d}_o,0)^{\rm T}$,
	$P_nYP_n^{\rm T}=\left(
	\begin{array}{cc}
		Y_{11} & 0 \\
		0 & Y_{22} \\
	\end{array}
	\right)$,
	$P_nAP_n^{\rm T}=\left(
	\begin{array}{cc}
		A_{11} & 0 \\
		0 & A_{22} \\
	\end{array}
	\right)$
	and $P_nOP_n^{\rm T}=\left(
	\begin{array}{cc}
		O_{11} & 0 \\
		0 & O_{22} \\
	\end{array}
	\right),$
	where $\mathbf{d}_o\in{\mathbb R}^{n}$, $Y_{ii}, A_{ii}, O_{ii}\in\mathcal{M}_{n}(\mathbb{R}), i=1,2$. Therefore,
\begin{widetext}	
\begin{equation}\label{eq14}
		\begin{array}{rl}P_nT'P_n^{\rm T}&=P_n(AT\Sigma_n O^{\rm T} \Sigma_n)P_n^{\rm T}\\
			=&(P_nAP_n^{\rm T})(P_nTP_n^{\rm T})(P_n\Sigma_nP_n^{\rm T})(P_nO^{\rm T}P_n^{\rm T})(P_n\Sigma_nP_n^{\rm T})\\
			=&\left(
			\begin{array}{cc}
				A_{11} & 0\\
				0 & A_{22} \\
			\end{array}
			\right)
			\left(
			\begin{array}{cc}
				T_{11} & T_{12}\\
				T_{21} & T_{22} \\
			\end{array}
			\right)
			\left(
			\begin{array}{cc}
				I_n & 0\\
				0 & -I_n \\
			\end{array}
			\right)
			\left(
			\begin{array}{cc}
				O_{11}^{\rm T} & 0\\
				0 & O_{22}^{\rm T} \\
			\end{array}
			\right)
			\left(
			\begin{array}{cc}
				I_n & 0\\
				0 & -I_n \\
			\end{array}
			\right)\\
			=&\left(
			\begin{array}{cc}
				A_{11}T_{11}O_{11}^{\rm T}&
				A_{11}T_{12}O_{22}^{\rm T}\\
				A_{22}T_{21}O_{11}^{\rm T}&
				A_{22}T_{22}O_{22}^{\rm T}\\
			\end{array}
			\right),\end{array}
	\end{equation}
		\begin{equation}\label{eq15}
		\begin{array}{rl} P_nN'P_n^{\rm T}= &P_n(ANA^{\rm T}+Y)P_n^{\rm T}\\
			= & (P_nAP_n^{\rm T})(P_nN P_n^{\rm T})(P_nA^{\rm T}P_n^{\rm T})+P_nYP_n^{\rm T}\\
			=&\left(
			\begin{array}{cc}
				A_{11} & 0\\
				0 & A_{22} \\
			\end{array}
			\right)
			\left(
			\begin{array}{cc}
				N_{11} & N_{12}\\
				N_{12}^{\rm T} & N_{22} \\
			\end{array}
			\right)
			\left(
			\begin{array}{cc}
				A_{11}^{\rm T} & 0\\
				0 & A_{22}^{\rm T} \\
			\end{array}
			\right)
			+\left(
			\begin{array}{cc}
				Y_{11} & 0\\
				0 & Y_{22} \\
			\end{array}
			\right)\\
			= & \left(\begin{array}{cc}
				A_{11}N_{11}A_{11}^{\rm T}+Y_{11}     & A_{11}N_{12}A_{22}^{\rm T} \\
				A_{22}N_{12}^{\rm T}A_{11}^{\rm T} & A_{22}N_{22}A_{22}^{\rm T}+Y_{22}\\
			\end{array}
			\right)
	\end{array}\end{equation}
	and
	\begin{equation}\label{eq16}
		\begin{array}{rl}Q_n'P_n\mathbf{d}'= & Q_n'P_n(A\mathbf{d}+\bar{\mathbf{d}})\\
			= & Q_n'(P_nAP_n^{\rm T})(P_n\mathbf{d})+Q_n'P_n\bar{\mathbf{d}}\\
			= & Q_n'\left(
			\begin{array}{cc}
				A_{11} & 0 \\
				0 & A_{22} \\
			\end{array}
			\right)\left(
			\begin{array}{c}
				Q_nP_n\mathbf{d} \\
				Q_n'P_n\mathbf{d} \\
			\end{array}
			\right)\\
			= & A_{22}Q_n'P_n\mathbf{d}.\end{array}\end{equation}
	By the Eqs.\eqref{eqdf8}, \eqref{eq14}-\eqref{eq16}, we can derive that
	$$\begin{array}{rl}\mathcal{I}_d^{GC}(\Phi(\phi))
		= &h(\|Q'_nP_nT'P_n^{\rm T}Q_n^{\rm T}\|_{\rm Tr})+h(\|Q_nP_nT'P_n^{\rm T}Q_n^{'\rm T}\|_{\rm Tr}\|Q'_nP_nT'P_n^{\rm T}Q_n^{'\rm T}\|_{\rm Tr})\\
		& +h(\|Q_nP_nN'P_n^{\rm T}Q_n'^{\rm T}\|_{\rm Tr})+h(\|Q_n'P_n \mathbf{d}'\|_1)\\
		= & h(\|A_{22}T_{21}O_{11}^{\rm T}\|_{\rm Tr})+h(\|A_{11}T_{12}O_{22}^{\rm T}\|_{\rm Tr}\|A_{22}T_{22}O_{22}^{\rm T}\|_{\rm Tr})\\
		& + h(\|A_{11}N_{12}A_{22}^{\rm T}\|_{\rm Tr})+h(\|A_{22}Q_n'P_n \mathbf{d}\|_1)\\
		\le & h(\|T_{21}\|_{\rm Tr})+h(\|T_{12}\|_{\rm Tr}\|T_{22}\|_{\rm Tr})+h(\|N_{12}\|_{\rm Tr})+h(\|Q_n'P_n\bar{\mathbf{d}}\|_1)
		= \mathcal{I}_d^{GC}(\phi).
	\end{array}$$
		\end{widetext}
\end{proof}
\fi

\end{document}